\definecolor{webgreen}{rgb}{0,.5,0}
\definecolor{webbrown}{rgb}{.8,0,0}
\definecolor{emphcolor}{rgb}{0.95,0.95,0.95}
\numberwithin{equation}{section}
\newtheorem{theorem}{Theorem}[section]
\newtheorem{proposition}{Proposition}[section]
\newtheorem{corollary}{Corollary}[section]
\newtheorem{remark}{Remark}[section]
\newtheorem{lemma}{Lemma}[section]
\numberwithin{remark}{section} \numberwithin{proposition}{section}
\numberwithin{corollary}{section}
\newcommand {\R}{\mathbb{R}}
\newcommand {\F}{\mathcal{F}}
\newcommand {\p}{\mathbb{P}}
\newcommand {\G}{\mathfrak{G}}
\newcommand {\E}{\mathbb{E}}
\newcommand{\diff}{{\rm d}}
\newcommand{\word}{\hspace{0.2cm}}
\newcommand{\conn}{\quad\text{and}\quad}
\newcommand{\1}{\mbox{1}\hspace{-0.25em}\mbox{l}}
\newcommand{\II}{\mathcal{I}}
\newcommand{\ii}{\int_0^\infty}
\title[Time Reversal and Last Passage Time of Diffusions]{Time Reversal and Last Passage Time of Diffusions with Applications to Credit Risk Management}
\author[Egami]{Masahiko Egami}
\address[M. Egami]{Graduate School of Economics,
	Kyoto University, Sakyo-Ku, Kyoto, 606-8501, Japan}
\email{egami@econ.kyoto-u.ac.jp}
\urladdr{http://www.econ.kyoto-u.ac.jp/{\textasciitilde}egami/}
\thanks{The first author is supported by Grant-in-Aid for Scientific Research (C) No. 18K01683, Japan Society for the Promotion of Science. The second author is in part supported
	by JSPS KAKENHI Grant Number JP 17J06948.}
\author[Kevkhishvili]{RUSUDAN KEVKHISHVILI}
\address[R. Kevkhishvili ]{Graduate School of Economics,
	Kyoto University, Sakyo-Ku, Kyoto, 606-8501, Japan}
\address{Research Fellow of Japan Society for the Promotion of Science}
\email{keheisshuiri.rusudan.73m@st.kyoto-u.ac.jp}
\date{}
\begin{document}

\begin{abstract}
We study time reversal, last passage time, and $h$-transform of linear diffusions. For general diffusions with killing, we obtain the probability density of the last passage time to an arbitrary level and analyze the distribution of the time left until killing after the last passage time. With these tools, we develop a new risk management framework for companies based on the leverage process (the ratio of a company asset process over its debt) and its corresponding alarming level. We also suggest how a company can determine the alarming level for the leverage process by constructing a relevant optimization problem.
\end{abstract}
\maketitle \noindent \small{\textbf{Key Words:} Time reversal; linear diffusion; last passage time; $h$-transform; risk management.
	%\newline \noindent JEL Classification: G32; G33
	\newline \noindent Mathematics Subject Classification (2010): 60J60; 60J70
\newline \noindent JEL Classification: G32 }

\section{Introduction}\label{sec:introduction}
In this paper, we study last passage time to a specific state and time reversal of linear diffusions and consider their applications to credit risk management. Specifically, we deal with a \emph{general time-homogeneous transient linear diffusion} process $X$ that has a killing boundary. We address three problems concerning diffusion $X$. First, we fix an arbitrary level, which we denote by $\alpha$ and call ``reference point" in the mathematical context or ``alarming (or premonition) level" in the financial context. This is an entrance point to a certain region which we refer to as dangerous zone. We study the distribution of the last passage time to this point. Second, we derive the distribution of the time between the last passage time to $\alpha$ and killing time. Finally, we suggest how the level $\alpha$ of interest can be chosen as a solution to an optimization problem in the context of credit risk management.
\newline \indent Last passage times of standard Markov processes are studied in \citet{getoor-sharpe} where they analyze the joint distribution of the last exit time of the process from a transient set and its location at that time. \citet{pitman-yor} study the density of last exit time of regular linear diffusions on positive axis with the scale function satisfying $s(0+)=-\infty$ and $s(\infty)<\infty$ by using Tanaka's formula and apply the result to Bessel processes. We study last passage times in a general setting and employ the technique presented in \citet{salminen1984}, which uses the $h$-transform method. This technique is a fast and easy way to obtain an explicit formula. In Proposition \ref{prop:1} we treat various cases comprehensively, providing a method with which one can find the distribution of last passage time of general diffusions, irrespective of the spacial relationship of the starting point and reference point, or whether killing time is almost surely finite or not.  The density of the last passage time to level $\alpha$ can be used for a premonition of imminent killing: how dangerous would it be if the process hits that level $\alpha$?  We apply our mathematical results to the leverage process of the company which is a function of a regular time-homogeneous linear diffusion in our setting (see Section \ref{subsec:2-2} for the financial model). We discuss this application in Section \ref{sec:intro-application} in detail.
\newline\indent Section \ref{subsec:4-general} is concerned with the time left until killing after the last passage time to $\alpha$ which is the second problem of our interest. While the literature mostly deals with the total time spent in a dangerous zone, that is, occupation time, we study the time after the last passage time to level $\alpha$. This is essential information for risk management and is one of the novel features of this section. Since this problem involves two random variables (last passage time and time of death), it is complex; therefore, we use the reversed process of diffusions to make the problem simpler. See \citet{sharpe1980} and \citet{williams1974} for a specific example of time reversal and a recent account by \citet{chung-walsh}. Proposition \ref{prop:reversal} derives the distribution of the time left until killing (after the last passage time to $\alpha$) for a general diffusion $X$. In case of a Brownian motion with drift, we obtain semi-explicit expression of the density of the time until killing in Proposition \ref{prop:additional}. We cannot find any articles that handle this problem to the best of our knowledge.
\newline \indent Finally, in Section \ref{sec:endogenizing}, we suggest a method to choose an appropriate level $\alpha^*$ for our general diffusion $X$. For this, we formulate an optimization problem, which we believe is new, using the last passage time arguments and occupation time distribution. This problem is constructed in the context of credit risk management but can be applied to various problems. We continue this section with discussing the application of our theoretical results to risk management.
\subsection{Application to Credit Risk Management Framework}\label{sec:intro-application}
\indent We apply general results for a diffusion $X$ (Propositions \ref{prop:1} and \ref{prop:reversal}) to the leverage process of the company. In our financial model (Section \ref{subsec:2-2}), the leverage process is a function of a regular time-homogeneous linear diffusion. We are interested in a certain threshold, denoted by $R^*$, of a company's leverage ratio, an exit from which means an entry into a dangerous zone and leads to insolvency without returning to $R^*$. In our setting, the level $R^*$ for the leverage ratio is equivalent to a certain level $\alpha$ for the underlying diffusion; therefore, we continue the discussion using $\alpha$.

\citet{elliot2000} discuss valuation of defaultable claims when the payoff depends on last passage time of a firm's value to a certain state (see also \citet{jeanblanc_rutkowski}). We study the last passage time to $\alpha$ (denoted by $\lambda_{\alpha}$) before insolvency for the leverage process and analyze its implications regarding credit risk. It is well-known that last passage time is not a stopping time. We take this point into consideration by using the optional projection (see Remark \ref{rmrk:projection}). The \emph{last} passage to $\alpha$ indicates that the company cannot recover to normal business conditions once this occurs. It is often the case that companies in financial distress cannot recover once the leverage ratio deteriorates to a certain level: the lack of creditworthiness makes it almost impossible to continue usual business relations with their contractors, suppliers, customers, creditors, and investors, which further pushes the company to the brim of insolvency. In this sense, $\alpha$ can be considered as a precautionary level, the passage of which triggers an alarm. \emph{While $\lambda_{\alpha}$ is not a stopping time, the density obtained in Corollary \ref{coro:1} enables us to calculate certain probabilities associated with $\lambda_{\alpha}$.} We emphasize that all of these probabilities are calculated \emph{based on the current position of the leverage process and the information available up to current time.} This is a novel approach to analyzing the dynamics of the leverage ratio and there are no other studies to the best of our knowledge.
%Based on this density $\p_\cdot(\lambda_\alpha\in \diff t)$, we can compute different probabilities associated with $\lambda_{\alpha}$. As an example, we can compute $\int_0^s\p_\cdot(\lambda_\alpha\in \diff t)$, the probability that the premonition occurs within $s$ period of time.
As we demonstrate by using actual company data in Section \ref{subsec:last-passage-application}, the probabilities associated with $\lambda_{\alpha}$ provide very useful information for risk management.  

Together with the last passage time, we study the time left until insolvency after leverage process passes $\alpha$ for the last time. There are no other studies that analyze this time interval. The expectation of this random variable can be computed at time zero based on the drift and volatility parameters of the leverage process. We obtained the density of the time until insolvency after $\lambda_{\alpha}$ semi-explicitly in Proposition \ref{prop:additional}. See Section \ref{subsec:4-special} and Figure \ref{Fig-problem2}, where we used the actual company data for the analysis. 
\newline \indent Finally, the appropriate level of $\alpha^*$ for the leverage ratio, passing of which should trigger an alarm to the management, can be determined by solving the optimization problem in Section \ref{sec:endogenizing}. This problem involves last passage time and occupation time of a dangerous zone. Below we cite some articles that are related to our optimization problem. \citet{gerber} model the surplus of the company by using Brownian motion with drift and use the Omega model to analyze occupation time in ``red''. This model assumes that there is a time interval between the first instance of the company's surplus becoming negative and bankruptcy. They study the total time Brownian motion with positive drift spends below zero and the relation between the Laplace transform of this occupation time and the probability of going bankrupt in finite time. \citet{albrecher} model the surplus as a spectrally-negative Markov additive process and assume the surplus is observed at arrival times of an independent Poisson process. They assume that the rate of observations depends on the current state of the environment. In the same vein, we change the parameters of the asset process when the leverage ratio is below a certain threshold in Section \ref{subsubsec:alpha-effect}, where we analyze the effect of different managerial strategies.

This article lays out the  mathematical foundation of the last passage time that is vital for practical implementation. By using the company's balance sheet data and equity time series for estimating drift and variance parameters of its asset value process, through the method described in Section \ref{subsubsec:method}, the management can determine the threshold level $\alpha^*$ below which the company should be operated on alert and with precaution. For this fixed $\alpha^*$, the management can compute different probabilities associated with the last passage time $\lambda_{\alpha^*}$ and the distribution of time between $\lambda_{\alpha^*}$ and the insolvency. We believe that this paper can contribute to more refined credit risk management for companies, as illustrated in Section \ref{sec:application} with empirical analysis and summarized in Section \ref{sec:conclusion}. The data for all the estimation in the paper was obtained from Thomson Reuters Datastream.
%Numerical integration in Tables \ref{tbl:american-apparel-2} and \ref{tbl:american-apparel-alpha} was done in Mathematica 10.1. Figures \ref{Fig-problem2}, \ref{fig:alpha_gamma}, and \ref{Fig-problem2-2} and the values in Table \ref{tbl:comparative_alpha} were produced in Matlab 2015a.

\section{Mathematical Framework}
Let us consider the probability space $(\Omega, \F, \p)$, where $\Omega$ is the set of all possible realizations of the
stochastic economy, and $\p$ is a probability measure defined on $\F$. We denote by
$\mathbb{F}=\{\F_t\}_{t\ge 0}$ the filtration satisfying the usual conditions and consider a regular time-homogeneous diffusion process $X$ adapted to $\mathbb{F}$. The state space of $X$ is $\mathcal{I}=(\ell, r)\in \R$ and we adjoin an isolated point $\Delta$ to $\mathcal{I}$. We call $\omega$ a sample path from $[0, \infty)\mapsto \mathcal{I}\cup \Delta$ with coordinates $\omega_t=X_t(\omega)$. The lifetime of $X$ is defined by $\zeta(\omega):=\inf\{t\ge 0: X_t(\omega)\notin \II\}$. Let $\p_y$ denote the probability measure associated to $X$ when started at $y\in \mathcal{I}$. Let $(P_t)_{t\ge 0}$ be the transition semigroup of $X$. For $X$ with infinitesimal parameters $\mu(x)$ and $\sigma^2(x)$, the generator is defined by
\[\G f(x)=\mu(x)f'(x)+\frac{1}{2}\sigma^2(x)f''(x)\]
for a twice continuously differentiable function $f(x)$ on $\mathcal{I}$.

Let the first passage time be denoted by $T_x=\inf\{t\ge 0: X_t=x\}$ for $x\in \mathcal{I}$.  For $\ell<a\le y\le b<r$, the scale function $s(\cdot)$ of $X$ satisfies
\[
\p_y(T_a<T_b)=\frac{s(b)-s(y)}{s(b)-s(a)}.
\] For more information about diffusion processes, see, e.g., \citet[chapter II]{borodina-salminen}).
\subsection{$h$-transform}
Let $h$ be an excessive function; that is, $h$ is a nonnegative Borel measurable function with the properties
\begin{align*}
&h\ge P_t h, \quad \forall t\ge 0 \conn h=\lim_{t\downarrow 0}P_t h.
\end{align*}
% and let $\mathbf{E}_h:=\{x: 0<h(x)<+\infty\}$.
For a Borel measurable set $A\in \mathcal{B}(\mathcal{I})$, define for $u, v\in \mathcal{I}$
\begin{equation}\label{eq:h-transform}
P^h(t; u, A):=\frac{1}{h(u)}\int_A h(v) P(t; u, \diff v).
\end{equation} The $h$-transform of $X$ is a regular diffusion with the transition function \eqref{eq:h-transform}. Following Salminen \cite{salminen1985}, let us call an excessive function $h$ \emph{minimal} if the $h$-transform of $X$ converges $\p^h_{.}$-a.s. to a single point, that is, for all $y\in \mathcal{I}$
\[
\p_y^h(\lim_{t\rightarrow \zeta}X_t=z)=1
\] for some $z\in [\ell, r]$.  Note that $\p_y^h$ is the probability law of the $h$-transform of $X$ starting at $y$.
\subsection{Leverage Process}\label{subsec:2-2}
We use a \emph{structural approach} proposed by \citet{merton1974} and analyze the credit-worthiness of a company through the behavior of its unobservable asset process (firm's value). This approach models the firm's value as a geometric Brownian motion and assumes the company equity is a European call option written on the asset process with a strike price equal to the value of debt at maturity. The structural approach is widely implemented in practice and one of the examples is Expected Default Frequency (EDF) model provided by Moody's Analytics. For more information about structural models, we refer the reader to Section 10.3 in \citet{mcneil-frey2015}.
\newline \indent Suppose that a firm has total assets with market value $A=(A_t)_{t\ge 0}$. We assume that the asset process $A$ follows geometric Brownian motion with parameters $\nu\in \R$ and $\sigma>0$, and the debt process $D=(D_t)_{t\ge 0}$ grows at the risk-free rate of $r$:
\[
\diff A_t =\left(\nu+\frac{1}{2}\sigma^2\right) A_t \diff t + \sigma A_t \diff B_t^A\conn
\diff D_t = rD_t \diff t
\] where we set initial values $A_0$ and $D_0$, respectively. $B^A=\left(B_t^A\right)_{t\geq0}$ denotes a standard Brownian motion adapted to $\F_t$. By assuming $A_0>D_0$, we define the leverage process $R=(R_t)_{t\ge 0}$ as
$R_t := \frac{A_t}{D_t}$. Then, we set the insolvency time of the firm as \[
T:=\inf\{t\ge 0: R_t =1\}.
\]
Since $A_t=A_0e^{\nu t+\sigma B_t^A}$ and $D_t=D_0e^{rt}$,
\[R_t=\frac{A_0}{D_0}\exp\left(\left(\nu-r\right)t + \sigma B_t^A\right)\]
and $R_T=1$ implies
\[\frac{\nu-r}{\sigma}T+B_T^A=\frac{1}{\sigma}\ln \left(\frac{D_0}{A_0}\right):=c\]
which means that the insolvency time is the first passage time of Brownian motion with drift $\frac{\nu-r}{\sigma}$ and unit variance parameter to state $c$.
Consequently, our study about the leverage process $R_t$ can be reduced to the study of the Brownian motion with drift and unit variance parameter:
\begin{equation}\label{eq:dynamics-of-X}
X_t:=y+ \frac{\nu-r}{\sigma} t + B_t^A, \quad t\ge 0
\end{equation}
on the state space $\mathcal{I}=(c, \infty)$. It follows that $y=0$ in our model; however, we continue the discussion for an arbitrary $y$ for the purpose of making general statements. We have
\begin{equation*}
T=T_c:=\inf\{t\ge 0: y+\mu t + B_t^A=c\} \quad \text{with \word $\mu:=\frac{\nu-r}{\sigma}$}.
\end{equation*}

Since the stopping time $T$ is predictable, it is possible and may be a good idea to set a threshold level $R^*$ for the leverage process, so  that when it passes this point from above, the firm should prepare and start precautionary measures to avoid possible subsequent insolvency. $R_t=R^*$ means that $X_t=\frac{1}{\sigma}\ln \left(\frac{R^* D_0}{A_0}\right):=\alpha$, and we can again study the passage time to this arbitrary $\alpha\in (c, r)$ for the Brownian motion with drift starting from $y$.

We discuss and prove our results for a generic diffusion $X$. We wish to stress that our assertions below hold in a general setting and thus are applicable to other problems: the general results are Propositions \ref{prop:1} and \ref{prop:reversal}. Hence, the results for our leverage process are derived as their respective Corollaries \ref{coro:1} and \ref{coro:2}. In addition, we obtain a representation of the density for the time until insolvency for the leverage process in Proposition \ref{prop:additional}.  While we use the same $X$ for general diffusion and the specific leverage process, we have made sure that the reader would not be confused.

\section{Mathematical Results} \label{sec:general}
Let $X=\{\omega_t, t\ge 0, \p_y\}$ on $\mathcal{I}=(\ell, r)$ with $s(\ell)>-\infty$ and $s(r)=+\infty$ (or $s(r)<+\infty$) be a regular time-homogeneous transient diffusion process with lifetime $\zeta=\inf\{t: \: X_t\notin \mathcal{I}\}$. The left boundary is regular with killing and the right boundary is natural. Henceforth, we refer to the case of $s(\ell)>-\infty$ and $s(r)=+\infty$ as Case 1 and refer to the case of $s(\ell)>-\infty$ and $s(r)<+\infty$ as Case 2.
\subsection{The Last Passage Time}\label{subsec:last-passage}
Let $\lambda_x$ denote the last passage time to the level $x$. Then, $\lambda_x=\sup\{t\geq0: \omega_t=x\}$ with the convention $\sup\{\emptyset\}=0$. Before deriving the distribution of the last passage time, let us introduce some objects that are needed.\\

\noindent \underline{Case 1:}
In Case 1, the functions
\begin{align*}
k_{\ell}(x)&=1 \conn k_r(x)=s(x)-s(\ell), \quad x\in \mathcal{I},
\end{align*} are minimal excessive. The excessiveness is because they are harmonic (see Theorem 12.4 of Dynkin \cite{dynkin}). Now, $k_{\ell}(x)=1$ is minimal because the (original) diffusion converges to the boundary point $\ell$ (since $s(\ell)>-\infty$ and $s(r)=+\infty$), that is, $\p_y(\lim_{t\rightarrow \zeta}X_t=\ell)=1$ for all $y\in \mathcal{I}$. Similarly, if we consider the $k_r$-transform of $X$, it is a regular diffusion with a transition function as in \eqref{eq:h-transform}
\[P^{k_r}(t; u, A)=\frac{1}{k_r(u)}\int_A k_r(v) P(t; u, \diff v)\]
for a Borel measurable set $A$. Its scale function is $s^{r}(x)=-\dfrac{1}{s(x)-s(\ell)}$. Then, we can see that for all $y\in \mathcal{I}$, we have $\p_y^{k_r}(\lim_{t\rightarrow \zeta} X_t=r)=1$ since $s^r(\ell)=-\infty$.

Note that the Green function (see \citet{borodina-salminen} for the definition) for $X$ is
\begin{equation}\label{eq:G}
G(y, z)=\ii p(t; y, z)\diff t=(s(y)-s(\ell))\wedge (s(z)-s(\ell)), \quad y, z\in\mathcal{I},
\end{equation} where $p(t; y, z)$ is the transition density with respect to the speed measure. Let us consider
\begin{equation}\label{eq:kx-case1}
k_x(y):=\min\left(k_{\ell}(y),\dfrac{k_r(y)}{s(x)-s(\ell)}\right)=\p_y(\text{hit}\; x \; \text{before}\; \ell)=\p_y(\lambda_x>0)=\begin{cases} \dfrac{s(y)-s(\ell)}{s(x)-s(\ell)}, \quad y\leq x,\\
\qquad 1, \quad \quad \quad \; y>x.
\end{cases}
\end{equation}
Note that $k_x$ is excessive since it is the minimum of the two excessive functions (see Proposition 3.2.2 in \citet{chung-walsh}).
As in \eqref{eq:h-transform}, the $k_x$-transform of $X$ is a regular diffusion with transition density function
\[P^{k_x}(t; u, A)=\frac{1}{k_x(u)}\int_A k_x(v) P(t; u, \diff v)\]
for a Borel set $A$.
It can be seen that
\begin{align}\label{eq:identity-pre}
&\text{The $k_x$-transform (or $k_x$-diffusion) is identical in law}\\
&\text{ with $X$ conditioned to hit $x$ and be killed at its last exit time from $x$.} \nonumber
\end{align}
Indeed, for any $u,v\in\mathcal{I}$, the density $p^*$ of such conditioned diffusion with respect to the speed measure satisfies
\begin{align*}
\nonumber
p^*(t;u,v)m^*(\diff v)&=\p_u(X_t\in \diff v, t<\lambda_x\mid \lambda_x>0)=\frac{\E_u[\1_{X_t\in \diff v}\1_{t<\lambda_x}\1_{\lambda_x>0}]}{\p_u(\lambda_x>0)}=\frac{\E_u[\1_{X_t\in \diff v}\1_{t<\lambda_x}]}{\p_u(\lambda_x>0)}\\
&=\frac{\E_u[\1_{X_t\in \diff v}\E_u[\1_{t<\lambda_x}\mid \mathcal{F}_t]]}{\p_u(\lambda_x>0)}=\frac{\E_u[\1_{X_t\in \diff v}\E_u[\1_{\lambda_x\circ \theta(t)>0}\mid \mathcal{F}_t]]}{\p_u(\lambda_x>0)}\\
&=\frac{\E_u[\1_{X_t\in \diff v}]\E_v[\1_{\lambda_x>0}]}{\p_u(\lambda_x>0)}=\frac{p(t;u,v)m(\diff v)k_x(v)}{k_x(u)},
\end{align*}
where $m$ denotes the speed measure of the original diffusion $X$ and $\theta(\cdot)$ is the shift operator. It follows that, for all $x\in \mathcal{I}$, $\p_y^{k_x}(\lim_{t\rightarrow \zeta} X_t=x)=1$, so that $k_x$ is minimal.\\

\noindent \underline{Case 2:} In Case 2, the minimal excessive functions are
\begin{align}\label{eq:minimal-case2}
k_{\ell}(x)=s(r)-s(x),\\
k_r(x)=s(x)-s(\ell),\nonumber
\end{align}
and
\begin{equation}\label{eq:kx-case2}
k_x(y)=
\begin{cases}\frac{s(y)-s(\ell)}{s(x)-s(\ell)}, \quad y\leq x,\\
\frac{s(r)-s(y)}{s(r)-s(x)}, \quad y>x,
\end{cases}
\end{equation}
with $x, y\in\mathcal{I}$ (e.g., see Theorem 2.10 in \citet{salminen1985}). The Green function for $X$ is
\begin{equation*}\label{eq:G1}
G(y, z)=\ii p(t; y, z)\diff t=
\begin{cases}
\frac{(s(y)-s(\ell))(s(r)-s(z))}{s(r)-s(\ell)}\quad \ell<y\leq z<r,\\
\frac{(s(z)-s(\ell))(s(r)-s(y))}{s(r)-s(\ell)}\quad \ell<z\leq y<r,\\
\end{cases}
\end{equation*} where $p(t; y, z)$ is the transition density with respect to the speed measure.

We have the following Proposition \ref{prop:1} concerning the distribution of $\lambda_x$ for a general $X$ that complements Proposition 4 and Corollary 6 in Salminen \cite{salminen1984}. Indeed, in the proof we provide detailed techniques applicable to various cases, which depend upon whether $s(r)$ is finite or infinite, and whether $\p_y(T_x<\infty)$ is equal to or less than $1$. 

\begin{proposition}\label{prop:1}
	Let $X$ be a diffusion process with state space $\mathcal{I}=(\ell, r)$, $\ell$ being a regular killing boundary. Suppose that (Case 1) $s(\ell)>-\infty$ and $s(r)=+\infty$, or (Case 2) $s(\ell)>-\infty$ and $s(r)<+\infty$. Then, for any $y, x\in \mathcal{I}$, 
	%when $\lim\limits_{t\to\zeta}X_t=\ell$, 
	we have the following result:
	\begin{align}\label{eq:general-lambda}
	&\p_y(\lambda_x\in \diff t,\lambda_x>0)=\frac{p(t; y, x)}{s(x)-s(\ell)}\diff t \quad \textnormal{in Case 1},\\
	&\p_y(\lambda_x\in \diff t,\lambda_x>0,T_{\ell}<T_r)=\frac{p(t; y, x)}{s(x)-s(\ell)}\diff t \quad \textnormal{in Case 2}
	\end{align}
	where $p(t; y, x)$ is the transition density of $X$ with respect to its speed measure $m(\cdot)$.
\end{proposition}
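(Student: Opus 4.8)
The plan is to extract \eqref{eq:general-lambda} from the identity \eqref{eq:identity-pre}, which states that the $k_x$-diffusion is $X$ conditioned to hit $x$ and then killed exactly at its last exit from $x$. I first treat Case~1, in which $\lim_{t\to\zeta}X_t=\ell$ holds $\p_y$-a.s.\ so that the condition in the statement is automatic; Case~2 I reduce to Case~1 by an auxiliary $h$-transform. In Case~1 the lifetime $\zeta$ of the $k_x$-diffusion is precisely $\lambda_x$ on $\{\lambda_x>0\}$, and since $k_x(y)=\p_y(\lambda_x>0)$ this gives the factorization
\[
\p_y(\lambda_x\in \diff t,\lambda_x>0)=\p_y(\lambda_x>0)\,\p_y(\lambda_x\in \diff t\mid \lambda_x>0)=k_x(y)\,\p_y^{k_x}(\zeta\in \diff t),
\]
so the whole problem reduces to finding the lifetime density of the $k_x$-diffusion.

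\textbf{Lifetime density in Case~1.} I would read the survival function directly off the transformed kernel: $\p_y^{k_x}(\zeta>t)=P^{k_x}(t;y,\mathcal{I})=P_t k_x(y)/k_x(y)$, hence $\p_y^{k_x}(\zeta\in \diff t)=-k_x(y)^{-1}\,\partial_t P_t k_x(y)\,\diff t$. To evaluate $\partial_t P_t k_x$, note from the displayed formulas that $k_x(\cdot)=G(\cdot,x)/(s(x)-s(\ell))$; since the Green function obeys $-\G\,G(\cdot,x)=\delta_x$ (the unit atom at $x$ relative to the speed measure), we get $\G k_x(v)\,m(\diff v)=-(s(x)-s(\ell))^{-1}\delta_x(\diff v)$. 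Geometrically this is just the statement that $k_x$, written in the natural scale, is harmonic off $x$ and has a single downward kink there of size $(s(x)-s(\ell))^{-1}$. Plugging into the backward equation $\partial_t P_t k_x(y)=P_t(\G k_x)(y)=\int p(t;y,v)\,\G k_x(v)\,m(\diff v)=-(s(x)-s(\ell))^{-1}p(t;y,x)$ gives $\p_y^{k_x}(\zeta\in \diff t)=\frac{p(t;y,x)}{k_x(y)(s(x)-s(\ell))}\diff t$; multiplying by $k_x(y)$ yields \eqref{eq:general-lambda}. (As a check, $\ii p(t;y,x)\diff t=G(y,x)=(s(x)-s(\ell))k_x(y)$, so the lifetime density integrates to one.)

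\textbf{Reduction of Case~2.} Here both boundaries are attracting and $X$ need not converge to $\ell$; the left side of \eqref{eq:general-lambda} is then understood as the joint $\p_y(\lambda_x\in \diff t,\lambda_x>0,\lim_{t\to\zeta}X_t=\ell)$, which collapses to $\p_y(\lambda_x\in \diff t,\lambda_x>0)$ in Case~1. I would condition on $\{\lim_{t\to\zeta}X_t=\ell\}$ by passing to the $k_\ell$-transform, which by the Doob theory equals $\p_y(\,\cdot\mid \lim_{t\to\zeta}X_t=\ell)$ and carries the factor $\p_y(\lim_{t\to\zeta}X_t=\ell)=k_\ell(y)/(s(r)-s(\ell))$. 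A short computation shows its scale is $s^{k_\ell}(x)=(s(r)-s(x))^{-1}$ up to an additive constant, so $s^{k_\ell}(\ell)>-\infty$, $s^{k_\ell}(r)=+\infty$, and $\ell$ remains a killing boundary: this is exactly a Case~1 diffusion. Applying the Case~1 formula to it, with symmetric density $p^{k_\ell}(t;y,x)=p(t;y,x)/(k_\ell(y)k_\ell(x))$ and increment $s^{k_\ell}(x)-s^{k_\ell}(\ell)=\frac{s(x)-s(\ell)}{(s(r)-s(x))(s(r)-s(\ell))}$, and then multiplying by $\p_y(\lim_{t\to\zeta}X_t=\ell)$, all the $s(r)$- and $k_\ell$-factors cancel and leave $\frac{p(t;y,x)}{s(x)-s(\ell)}\diff t$, as required.

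\textbf{Main obstacle.} The crux is the generator identity $\G k_x(v)\,m(\diff v)=-(s(x)-s(\ell))^{-1}\delta_x(\diff v)$ and the justification of $\partial_t P_t k_x=P_t(\G k_x)$ for the merely excessive, non-$C^2$ function $k_x$; this is where the kink at $x$ must be handled rigorously, most cleanly through the scale--speed representation $\G=\frac{\diff}{\diff m}\frac{\diff}{\diff s}$ and the defining property of the Green function. Everything else, including the normalization bookkeeping in the Case~2 reduction, is routine once the Case~1 formula and the relation $G(y,x)=(s(x)-s(\ell))k_x(y)$ are in hand.
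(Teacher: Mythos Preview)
Your proposal is correct, and in Case~1 it follows a genuinely different route from the paper. The paper, following Salminen, identifies the lifetime law of the $k_x$-diffusion by verifying that the candidate density $p(t;y,x)/G(y,x)\,\diff t$ represents $k_x$ as an integral of minimal space--time excessive functions (Martin kernels). It then treats the subcase $y<x$ separately by passing to the $k_r$-transform and checking that $(\p_y^{k_r})^{k_x}=\p_y^{k_x}$. You instead compute the survival function directly as $\p_y^{k_x}(\zeta>t)=P_tk_x(y)/k_x(y)$ and differentiate, which handles $y>x$ and $y<x$ at once and avoids the Martin boundary machinery and the $k_r$ detour. In Case~2 your reduction via the $k_\ell$-transform is exactly what the paper does.

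One remark on your ``main obstacle'': you do not actually need the distributional identity $\G k_x\,m(\diff v)=-(s(x)-s(\ell))^{-1}\delta_x(\diff v)$ at all. Using $k_x(\cdot)=G(\cdot,x)/(s(x)-s(\ell))$ and Chapman--Kolmogorov,
\[
P_tk_x(y)=\frac{1}{s(x)-s(\ell)}\int p(t;y,v)\Big(\int_0^\infty p(u;v,x)\,\diff u\Big)m(\diff v)
=\frac{1}{s(x)-s(\ell)}\int_t^\infty p(s;y,x)\,\diff s,
\]
so $-\partial_tP_tk_x(y)=p(t;y,x)/(s(x)-s(\ell))$ with only Fubini and the semigroup property. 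This removes the need to justify $\partial_tP_tk_x=P_t(\G k_x)$ for the non-$C^2$ function $k_x$, and makes your argument strictly more elementary than the paper's.
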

\begin{remark}\label{rmrk:projection}
	\textnormal{Since the last passage time $\lambda_x$ is not a stopping time, we use optional projection (see \citet[Chpt 6. Sec.3]{rw-2}) to make it informative at arbitrary fixed time $t$. Specifically, we consider the optional projection of $\1_{\lambda_x>t}$ on the filtration $\mathbb{F}$ given by $\p_y(\lambda_x>t\mid \F_t)$ which is $\F_t$-measurable. We have
		\[\p_y(\lambda_x>t\mid \F_t)=\1_{t<T_{\ell}}\p_{X_t}(\lambda_x>0)=\begin{cases}
		\1_{t<T_{\ell}}\dfrac{s(X_t)-s(\ell)}{s(x)-s(\ell)} \quad \textnormal{if} \quad X_t\le x\\
		\1_{t<T_{\ell}}\dfrac{s(r)-s(X_t)}{s(r)-s(x)} \quad\textnormal{if} \quad X_t>x
		\end{cases}\]
		Note that since $s(r)=+\infty$ in Case 1, we would have $\p_y(\lambda_x>t\mid \F_t)=\1_{t<T_{\ell}}$ if $X_t>x$. We use both of the results, Proposition \ref{prop:1} and Remark \ref{rmrk:projection}, in the credit risk management application.} 
	%\noindent (2) Note that $\1_{\lambda_x>t}=\1_{\lambda_x\circ\theta_t>0}$ where $\theta$ is the shift operator. Therefore, using Markov property, 
	%\[\p_y(\lambda_x>t\mid \F_t)=\p_y(\lambda_x\circ\theta_t>0\mid \F_t)=\p_{X_t}(\lambda_x>0)=\p_{X_t}(\lambda_x>0,T_{\ell}<T_r)+\p_{X_t}(\lambda_x>0,T_r<T_{\ell}).\]
	%Using Proposition \ref{prop:1}, the first element satisfies
	%\[\p_{X_t}(\lambda_x>0,T_{\ell}<T_r)=\frac{\int_0^\infty p(s;X_t,x)\diff s}{s(x)-s(\ell)}.\]
	%Note that the second element is zero when we are dealing with diffusion in Case 1 ($s(r)=+\infty$). Using \eqref{eq:G}, we get the result for Case 1. Next we consider the second element in Case 2. We have 
	%\[\p_{X_t}(\lambda_x>0,T_r<T_{\ell})=\p_{X_t}(\lambda_x>0\mid T_r<T_{\ell})\p_{X_t}(T_r<T_{\ell}).\]
	%Therefore, we first consider diffusion $X$ conditioned on $T_r<T_{\ell}$. This is in fact a $k_r$-transform of the original diffusion $X$. Its characteristics are summarized in Remark \ref{rmrk-x-star} below. From Corollary 6 in \citet{salminen1984}, we obtain
	%\[\p_{X_t}(\lambda_x>0\mid T_r<T_{\ell})\p_{X_t}(T_r<T_{\ell})=\frac{\int_0^{\infty}p^{k_r}(s;X_t,x)\diff s}{s^r(r)-s^r(x)}\p_{X_t}(T_r<T_{\ell})=\frac{\int_0^{\infty}p(s;X_t,x)\diff s}{s(r)-s(x)}.\]
\end{remark}
\begin{proof}[Proof of Proposition \ref{prop:1}]
	First note that while we set $\ell$ as the killing boundary, it is for ease of exposition and our proof does not rely on this assumption.\\
	\underline{Case 1.}
	We have $s(\ell)>-\infty$ and $s(r)=\infty$. This includes the case when $\ell$ is a regular point with killing and $r$ is a natural boundary.
	%Let us recall some facts from the second paragraph of Section \ref{sec:last-passage-time}. The functions
	%\begin{align*}
	%k_{\ell}(x)&=1 \conn k_r(x)=s(x)-s(\ell), \quad x\in \mathcal{I}
	%\end{align*} are minimal excessive. The Green function for $X$ is
	%\begin{equation}\label{eq:G-appendix}
	%G(y, z)=\ii p(t; y, z)\diff t=(s(y)-s(\ell))\wedge (s(z)-s(\ell)), \quad y, z\in\mathcal{I},
	%\end{equation} where $p(t; y, z)$ is the transition density with respect to the speed measure.
	%Let us recall the other minimal excessive function
	%\begin{equation}\label{eq:kx-appendix}
	%k_x(y)=\min\left(k_{\ell}(y),\dfrac{k_r(y)}{s(x)-s(\ell)}\right)=\p_y(\text{hit}\; x \; \text{before}\; \ell)=\p_y(\lambda_x>0)=\begin{cases} \dfrac{s(y)-s(\ell)}{s(x)-s(\ell)}, \quad y\leq x,\\
	%\qquad 1, \quad \quad \quad \; y>x.
	%\end{cases}\end{equation}
	%The $k_x$-transform of $X$ is a regular diffusion with transition density function
	%\[P^{k_x}(t; u, A)=\frac{1}{k_x(u)}\int_A k_x(v) P(t; u, \diff v)\]
	%for a Borel set $A$. We reproduce the important identity \eqref{eq:identity-pre} here:
	%\begin{align}\label{eq:identity-appendix}
	%&\text{The $k_x$-transform (or $k_x$-diffusion) is identical in law with $X$}\\
	%&\text{when conditioned to hit $x$ and be killed at its last exit time from $x$.} \nonumber
	%\end{align}
	%That is, for all $x\in \mathcal{I}$, $\p_y^{k_x}(\lim_{t\rightarrow \zeta} X_t=x)=1$.
	First, we are interested in the case when $X$ starts at $y>x$. Then, $X$ would hit $x$ a.s. and the last exit time distribution from $x$ and the lifetime distribution of the abovementioned $k_x$-transform coincide. Thus, we wish to compute the lifetime distribution of $k_x$-diffusion.
	
	Let us consider the diffusion in space-time. Choose a point $(0, y)$ as a reference point in time and space. We set the Martin function (i.e., the minimal space-time excessive function: see Proposition 4 in Salminen \cite{salminen1984}) with support at a point $(t_0, x_0)$ as
	\begin{align*}
	_{(0, y)}k_{(t_0, x_0)}(v, z)=\begin{cases}
	\frac{p(t_0-v; z, x_0)}{p(t_0; y, x_0)}, & v<t_0,\\
	0, & v\ge t_0.
	\end{cases}
	\end{align*} We claim that, for $x<y$,
	\begin{equation}\label{eq:density}
	\p_y^{k_x}(\zeta \in \diff t)=\frac{p(t; y, x)}{s(x)-s(\ell)}\diff t=\frac{p(t; y, x)}{G(y, x)}\diff t.
	\end{equation}
	If we integrate the minimal space-time excessive function along the line $x=x_0$ with respect to the right-hand side of \eqref{eq:density}, we should have a space-time excessive function $h$ such that $h(0, y)=1$.
	Indeed,
	\begin{align*}
	h(v, z)&=\int_{0}^{\infty}{_{(0, y)}k_{(t, x_0)}(v, z)}\frac{p(t; y, x_0)}{G(y, x_0)}\diff t
	=\int_v^\infty \frac{p(t-v; z, x_0)}{p(t; y, x_0)}\frac{p(t; y, x_0)}{G(y, x_0)}\diff t\\
	&=\frac{G(z, x_0)}{G(y, x_0)}
	=\begin{cases}
	\dfrac{s(z)-s(\ell)}{s(x_0)-s(\ell)}, & z\le x_0,\\
	1, & z>x_0
	\end{cases}
	\end{align*} which is exactly
	$k_{x_0}(z)$.
	This shows that  the $k_x$-transform and the $h$-transform in the real line have the same finite dimensional distribution.  Therefore, we have $\p_y^{k_x}(\zeta \in \diff t)=\dfrac{p(t; y, x)}{G(y, x)}\diff t$ and
	\[
	\p_y^{k_x}(\zeta\in \diff t)=\p_y(\lambda_x\in \diff t)=\frac{p(t; y, x)}{G(y, x)}\diff t=\frac{p(t; y, x)}{s(x)-s(\ell)}\diff t \qquad \text{for} \; \: y>x
	\]  by \eqref{eq:G} and \eqref{eq:identity-pre}.

	Now let $y<x$. We need the following lemma, which is of interest in its own right:
	\begin{lemma}\label{lem:lifetime}
		Let $y<x$ and consider diffusion $X$ in Case 1. The $k_x$-transform of $X^{k_r}$ ($k_r$-transform of $X$) and the $k_x$-transform of $X$ starting from $y$ are identical in law.
	\end{lemma}
	
	\begin{proof}[Proof of Lemma \ref{lem:lifetime}]
		Recall that $k_r(x)=s(x)-s(\ell)$ and the transition density function of $k_r$-diffusion with respect to its speed measure is given by
		\begin{equation}\label{eq:kr-density-case1}
		p^{k_r}(t;y,x)m^{k_r}(\diff x)=\frac{p(t;y,x)k_r(x)}{k_r(y)}m(\diff x).
		\end{equation}
		Then, the scale function $s^r$ and the speed measure $m^r$ of $X^{k_r}$ are written as
		\begin{equation*}\label{eq:kr-parameters}
		s^r(x)=-\frac{1}{s(x)-s(\ell)} \conn m^r(\diff x)=(s(x)-s(\ell))^2 m(\diff x) \end{equation*}
		and the transition density, with respect to $m^r$, takes the form
		\begin{equation*}\label{eq:kr-density-evaluated}
		p^{k_r}(t;y,x)=\frac{p(t;y,x)}{(s(y)-s(\ell))(s(x)-s(\ell))},
		\end{equation*} which we write as $p^r(t; y, x)$ for short.
		We have
		\begin{align}\label{eq:kx-transforms}
		(\p^{k_r}_y)^{k_x}(t;y,A)&=\int_Ap^r(t;y,u)m^r(\diff u)\frac{\p_u^r(\lambda_x>0)}{\p_y^r(\lambda_x>0)}=\int_Ap(t;y,u)m(\diff u)\frac{s(u)-s(\ell)}{s(y)-s(\ell)}\frac{\p^r_u(\lambda_x>0)}{1}\nonumber\\
		&=\begin{cases} \bigintssss_Ap(t;y,u)m(\diff u)\dfrac{s(u)-s(\ell)}{s(y)-s(\ell)}, \quad u<x\\\nonumber
		\bigintssss_Ap(t;y,u)m(\diff u)\dfrac{s(u)-s(\ell)}{s(y)-s(\ell)}\dfrac{s^r(r)-s^r(u)}{s^r(r)-s^r(x)}, \quad u\geq x\end{cases}\\\nonumber
		&=\begin{cases} \bigintssss_Ap(t;y,u)m(\diff u)\dfrac{s(u)-s(\ell)}{s(y)-s(\ell)}, \quad u<x\\\nonumber
		\bigintssss_Ap(t;y,u)m(\diff u)\dfrac{s(x)-s(\ell)}{s(y)-s(\ell)}, \quad u\geq x\end{cases}\\\nonumber
		&=\int_Ap(t;y,u)m(\diff u)\frac{\p_u(\lambda_x>0)}{\p_y(\lambda_x>0)}\\
		&=\p^{k_x}_y(t;y,A).
		\end{align}
		In the first line, we used the definition of $k_x(\cdot)$ in \eqref{eq:kx-case1} and \eqref{eq:kr-density-case1} with $k_r(\cdot)=s(\cdot)-s(\ell)$.  For the third line, we used  $s(r)=+\infty$ in computing $s^{r}(r)=0$.
	\end{proof}
	From Lemma \ref{lem:lifetime} and \eqref{eq:identity-pre}, we obtain that the lifetime of the $k_x$- transform and the last exit time from $x$ for the $k_r$-transform have the same distribution:
	\[\p_y^{k_r}(\lambda_x\in \diff t)=\p^{k_x}_y(\zeta\in \diff t).\]	
	Now  $X^{k_r}$ converges to $r$ a.s. and starting from $y<x$, $X^{k_r}$ visits level $x$ a.s.. Hence, we can argue as in the previous case (i.e., $x<y$) with $X$ replaced by $X^{k_r}$. In particular, the last passage time to $x$ has the distribution
	\[\p^{k_r}_y(\lambda_x\in\diff t)=\p^{k_x}_y(\zeta\in \diff t)=\frac{p^{r}(t;y;x)}{G^r(y,x)}\diff t=\frac{p^{r}(t;y;x)}{s^r(r)-s^r(x)}\diff t=\frac{p(t;y,x)}{s(y)-s(\ell)}\diff t.\]
	
	In sum, for $y<x$, the last passage time to $x$ for our original $X$ has atom at $0$, since $X$ may not hit $x$ at all. The continuous part is given by
	\begin{equation}\label{eq:cont-part}
	\begin{aligned}
	\p_y(\lambda_x\in \diff t,\lambda_x>0)&=\p_y(\lambda_x\in \diff t\mid\lambda_x>0)\p_y(\lambda_x>0)=\p^{k_x}_y(\zeta\in\diff t)\times \p_y(T_x<T_{\ell})\\
	&=\frac{p(t;y,x)\diff t}{s(y)-s(\ell)}\frac{s(y)-s(\ell)}{s(x)-s(\ell)}=\frac{p(t;y,x)}{s(x)-s(\ell)}\diff t.
	\end{aligned}
	\end{equation}

	\noindent \underline{Case 2:}
	Let us assume that the scale function for the diffusion $X$ satisfies $s(\ell)>-\infty$ and $s(r)<\infty$. This includes the case when the left boundary is regular with killing and the right boundary is natural.
	Recall the minimal excessive function \eqref{eq:kx-case2}
	%\begin{equation}\label{eq:kl-appendix} k_{\ell}(x)=s(r)-s(x), \quad k_r(x)=s(x)-s(\ell), \end{equation}
	%and
	\begin{equation}\nonumber
	k_x(y)=
	\begin{cases}\dfrac{s(y)-s(\ell)}{s(x)-s(\ell)}, \quad y\leq x,\\
	\dfrac{s(r)-s(y)}{s(r)-s(x)}, \quad y>x,
	\end{cases}
	\end{equation}
	with $x, y\in\mathcal{I}$.
	%The Green function for $X$ is
	%\begin{equation*}\label{eq:G1-appendix}
	%G(y, z)=\ii p(t; y, z)\diff t=
	%\begin{cases}
	%\dfrac{(s(y)-s(\ell))(s(r)-s(z))}{s(r)-s(\ell)}\quad \ell<y\leq z<r,\\
	%\dfrac{(s(z)-s(\ell))(s(r)-s(y))}{s(r)-s(\ell)}\quad \ell<z\leq y<r,\\
	%\end{cases}
	%\end{equation*} where $p(t; y, z)$ is the transition density with respect to the speed measure.
	We observe that $k_x(y)=\p_y(\lambda_x>0)$. The diffusion $X$ in this case may not visit $\ell$ in finite time, so we force it to do so: the condition that $\lim\limits_{t\to\zeta}X_t=\ell$ is equivalent to $T_{\ell}<T_r$. By the Markov property, the transition density of this conditioned diffusion with respect to the speed measure is
	\begin{align*}
	p^*(t;u,v)m^*(\diff v)&=\p_u(X_t\in \diff v \mid T_{\ell}<T_r)=\dfrac{\p_u(X_t\in \diff v, T_{\ell}<T_r)}{\p_u(T_{\ell}<T_r)}=\dfrac{p(t;u,v)m(\diff v)\p_v(T_{\ell}<T_r)}{\p_u(T_{\ell}<T_r)}\\
	&=p(t;u,v)m(\diff v)\dfrac{s(r)-s(v)}{s(r)-s(\ell)}\div\dfrac{s(r)-s(u)}{s(r)-s(\ell)}=p(t;u,v)m(\diff v)\dfrac{s(r)-s(v)}{s(r)-s(u)},
	\end{align*}
	where $m$ denotes the speed measure of the original diffusion $X$. Note that this is the same as the $k_{\ell}$-transform (see \eqref{eq:minimal-case2}). The transition density of the conditioned diffusion $X^*$ with respect to its speed measure is thus given by \[p^*(t;u,v)=p(t;u,v)\dfrac{m(\diff v)}{m^*(\diff v)}\dfrac{s(r)-s(v)}{s(r)-s(u)}.\]
	Using a Taylor expansion of scale functions (see \citet[Chapter 15, Sec. 9]{karlin-book}), we obtain the scale function and the speed density for $X^*$:
	\begin{equation}\label{eq:s-star-appendix}
	s^*(x)=\dfrac{1}{s(r)-s(x)} \qquad m^*(\diff x)=(s(r)-s(x))^2m(\diff x)
	\end{equation}
	and
	\begin{equation}\label{eq:p-star-appendix}
	p^*(t;u,v)=\dfrac{p(t;u,v)}{(s(r)-s(v))(s(r)-s(u))}.
	\end{equation}
	We now consider the $k_x$-transform of $X^*$. Again, this transform is identical in law with $X^*$ when conditioned to hit $x$ and be killed at its last exit time from $x$. Since $s^*(r)=+\infty$ and $s^*(l)>-\infty$, $X^*$ behaves  similarly to the diffusion $X$ in Case 1.
	
	When $y>x$, $X^*$ would almost surely hit $x$ and the lifetime distribution of its $k_x$-transform and the last exit time distribution from $x$ coincide (see \eqref{eq:identity-pre}). From the result of Case 1, we have
	\[\p_y^*(\lambda_x\in \diff t)=\dfrac{p^*(t;y,x)}{G^*(y,x)}\diff t=\dfrac{p^*(t;y,x)}{s^*(x)-s^*(\ell)}\diff t=\dfrac{p(t;y,x)(s(r)-s(\ell))}{(s(r)-s(y))(s(x)-s(\ell))}\diff t\]
	where we used \eqref{eq:G}, \eqref{eq:s-star-appendix}, and \eqref{eq:p-star-appendix}.
	Now, $\p_y(T_{\ell}<T_r)=\dfrac{s(r)-s(y)}{s(r)-s(\ell)}$ and the last exit time distribution when $\lim\limits_{t\to\zeta}X_t=\ell$ is given by
	\begin{align*}
	\p_y(\lambda_x\in \diff t, T_{\ell}<T_r)&=\p_y(\lambda_x\in\diff t\mid T_{\ell}<T_r)\times\p_y(T_{\ell}<T_r)=\p_y^*(\lambda_x\in \diff t)\times \p_y(T_{\ell}<T_r)=\dfrac{p(t;y,x)}{s(x)-s(\ell)}\diff t
	\end{align*}
	and the result is the same as in Case 1.
	
	Next, let $y<x$. We consider the $k_r$-transform of the conditioned diffusion $X^*$. Since the conditioned diffusion $X^*$ behaves similarly to the diffusion $X$ in Case 1, we proceed similarly for $y<x$ in Case 1: its last passage time from $x$ has atom at $0$ and we get the density of the continuous part from \eqref{eq:cont-part}:
	\[\p_y^*(\lambda_x\in \diff t,\lambda_x>0)=\frac{p^*(t;y,x)}{s^*(x)-s^*(\ell)}\diff t=\frac{p(t;y,x)(s(r)-s(\ell))}{(s(r)-s(y))(s(x)-s(\ell))}\diff t.\]
	Finally,
	\[\p_y(\lambda_x\in \diff t,\lambda_x>0, T_{\ell}<T_r)=\p_y^*(\lambda_x\in \diff t,\lambda_x>0)\p_y(T_{\ell}<T_r)=\frac{p(t;y,x)}{s(x)-s(\ell)}\diff t\]
	and it is the same as in Case 1.
\end{proof}
For later reference, we record some important equations derived in the proof of Proposition \ref{prop:1}.
\begin{remark}\label{rmrk-x-star}
	\begin{normalfont}
		The scale function $s^r$ and the speed measure $m^r$ of the $k_r$-transform $X^{k_r}$ are written as
		\begin{equation}\label{eq:kr-parameters-pre}
		s^r(x)=-\frac{1}{s(x)-s(\ell)} \conn m^r(\diff x)=(s(x)-s(\ell))^2 m(\diff x) \end{equation}
		and the transition density, with respect to $m^r$, takes the form
		\begin{equation}\label{eq:kr-density-evaluated-pre}
		p^{k_r}(t;y,x)=\frac{p(t;y,x)}{(s(y)-s(\ell))(s(x)-s(\ell))},
		\end{equation} which we write as $p^r(t; y, x)$, in short.
		In Case 2, the diffusion conditioned that the killing occurs in finite time $T_{\ell}<T_r$ is denoted by $X^*$. Its
		scale function and speed density are
		\begin{equation}\label{eq:star-parameters}
		s^*(x)=\dfrac{1}{s(r)-s(x)} \conn m^*(\diff x)=(s(r)-s(x))^2m(\diff x),\end{equation}
		respectively and the transition density, with respect to $m^*$, is given by
		\begin{equation*}\label{eq:star-density}
		p^*(t;u,v)=\dfrac{p(t;u,v)}{(s(r)-s(v))(s(r)-s(u))}.\end{equation*}
	\end{normalfont}
\end{remark}
\bigskip

As we have discussed in Section \ref{subsec:2-2}, for the leverage process, we are interested in the last passage time of Brownian motion (starting at $y$) with drift $\mu(\neq0)$ and unit variance parameter to the state $\alpha$:
\[\lambda_\alpha:=\sup\{t\ge 0: X_t=\alpha\},\]
which is $0$ if the set in the brackets is empty. The \emph{scale function} $s(\cdot)$ for such Brownian motion is given by
\begin{equation}\label{scale-function}
s(x)=\frac{1}{2\mu}\left(1-e^{-2\mu x}\right)
\end{equation}
for $x\in(c,\infty)$, which is the state space in our case (\citet[Appendix 1]{borodina-salminen}. The left boundary $c$ is attracting since $s(c)>-\infty$. The right boundary $\infty$ can be attracting ($s(\infty)<\infty$ when $\mu>0$) or non-attracting ($s(\infty)=\infty$ when $\mu<0$).
We now apply Proposition \ref{prop:1} to our model.
\begin{corollary}\label{coro:1}
	Let $c$ be a regular killing boundary and consider Brownian motion $X_t$ on the state space $\mathcal{I}=(c, \infty)$ with drift $\mu=\frac{\nu-r}{\sigma}\neq 0$ and unit variance parameter. For any $y,\alpha\in \mathcal{I}$ satisfying $\alpha<y$, the distribution of $\lambda_\alpha$ when the company becomes insolvent in finite time $\big(\lim\limits_{t\to\zeta}X_t=c\big)$ is
	\begin{equation}\label{prop:1-eq}
	\p_y(\lambda_\alpha\in \diff t)=\frac{p(t; y, \alpha)}{\frac{1}{2\mu}\left(e^{-2\mu c}-e^{-2\mu\alpha}\right)}\diff t
	\end{equation}  where $p(t; u, v)$ is the transition density of the Brownian motion (with drift $\mu$ and unit variance parameter) being killed at $c$
	\begin{align}\label{eq:original-transition-density}
	&p(t; u, v)=\dfrac{1}{2\sqrt{2\pi t}}\exp\left(-\mu(u+v)-\dfrac{\mu^2t}{2}\right)\times \left(\exp\left(-\dfrac{(u-v)^2}{2t}\right)-\exp\left(-\dfrac{(u+v-2c)^2}{2t}\right)\right)
	\end{align} for $u,v>c$ with respect to the speed measure $m(\diff v)=2e^{2\mu v}\diff v$.
	For $\alpha>y$, the distribution of $\lambda_\alpha$ when the company becomes insolvent in finite time has atom at $0$. The continuous part $\p_y(\lambda_{\alpha}\in\diff t,\lambda_{\alpha}>0)$ is given by \eqref{prop:1-eq}.
\end{corollary}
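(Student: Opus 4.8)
The plan is to read the leverage model as a direct specialization of Proposition~\ref{prop:1}, with $\ell=c$ and $r=\infty$, and then supply the two explicit objects that the general formula \eqref{eq:general-lambda} needs: the scale-function difference $s(\alpha)-s(c)$ and the speed-measure transition density $p(t;y,\alpha)$. First I would verify the hypotheses in both drift regimes. From \eqref{scale-function} we have $s(c)>-\infty$, so $c$ is attracting and, by assumption, a regular killing boundary. If $\mu<0$ then $s(\infty)=+\infty$, which is Case~1; if $\mu>0$ then $s(\infty)<+\infty$, which is Case~2. In either regime the stated conditioning event $\{\lim_{t\to\zeta}X_t=c\}$ (``insolvency in finite time'') is exactly the event under which Proposition~\ref{prop:1} yields \eqref{eq:general-lambda}, so the corollary reduces to making the scale and density terms explicit.

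For the denominator, \eqref{scale-function} gives directly
\[
s(\alpha)-s(c)=\frac{1}{2\mu}\left(1-e^{-2\mu\alpha}\right)-\frac{1}{2\mu}\left(1-e^{-2\mu c}\right)=\frac{1}{2\mu}\left(e^{-2\mu c}-e^{-2\mu\alpha}\right),
\]
which is the denominator appearing in \eqref{prop:1-eq}. The associated speed measure follows from the scale density $s'(x)=e^{-2\mu x}$ via $m(\diff x)=2\,\diff x/s'(x)=2e^{2\mu x}\diff x$, matching the normalization stated in the corollary.

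Next I would derive the killed transition density \eqref{eq:original-transition-density}. The cleanest route is Girsanov: stripping the drift turns $X$ into standard Brownian motion started at $u$ and absorbed at $c$, whose Lebesgue-measure density is supplied by the method of images as $\frac{1}{\sqrt{2\pi t}}\left(e^{-(u-v)^2/(2t)}-e^{-(u+v-2c)^2/(2t)}\right)$. Reinstating the drift multiplies this by the Radon--Nikodym factor $\exp\left(\mu(v-u)-\frac{\mu^2 t}{2}\right)$ on $\{t<T_c\}$, and dividing by the speed density $2e^{2\mu v}$ to convert from the Lebesgue to the speed-measure normalization produces exactly \eqref{eq:original-transition-density}; a useful consistency check is that the result is symmetric in $(u,v)$, as any speed-measure density must be. Alternatively, this density can be quoted from \citet[Appendix~1]{borodin-salminen}.

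Finally I would settle the atom at the origin. For $\alpha<y$ the path starts above $\alpha$ and, on the conditioning event, converges to $c<\alpha$; by continuity it must cross $\alpha$, so $\p_y(\lambda_\alpha>0)=1$ and \eqref{eq:general-lambda} is the full law with no atom. For $\alpha>y$ the process may be killed at $c$ before ever reaching $\alpha$, so $\p_y(\lambda_\alpha=0)>0$; in this case \eqref{eq:general-lambda}, which is a statement on $\{\lambda_\alpha>0\}$, describes precisely the continuous part $\p_y(\lambda_\alpha\in\diff t,\lambda_\alpha>0)$. The main obstacle I anticipate is purely the bookkeeping in the density step: keeping the Girsanov exponential, the image term, and the factor-of-two speed normalization aligned so that $p(t;u,v)$ comes out symmetric and cancels correctly against $s(\alpha)-s(c)$; the rest is substitution into Proposition~\ref{prop:1}.
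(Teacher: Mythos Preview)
Your proposal is correct and follows essentially the same approach as the paper: identify that $\mu<0$ is Case~1 and $\mu>0$ is Case~2 of Proposition~\ref{prop:1}, then specialize with $\ell=c$. The paper's own proof is just those two sentences; you simply flesh out the scale-function computation and the derivation of the killed density (which the paper merely cites in a remark), so your version is a more detailed execution of the same argument rather than a different route.
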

\begin{proof}[Proof of Corollary \ref{coro:1}]
	The negative drift $\mu<0$ corresponds to Case 1 and the positive drift $\mu>0$ corresponds to Case 2 in Proposition \ref{prop:1}. We simply read the proof with $\ell=c$.
\end{proof}
\begin{remark}
	$1)$ \eqref{eq:original-transition-density} is the equation $(2.1.8)$ in \citet[Corollary 2.1.10]{platen}, where $u,v>c$.
	\newline $2)$ When $\mu>0$, the company may not become insolvent in finite time, with probability \[\p_y(T_r<T_c)=\dfrac{s(y)-s(c)}{s(r)-s(c)}=1-e^{-2\mu(y-c)},\] where we have used $s(r)=s(\infty)=\frac{1}{2\mu}$.
\end{remark}
\subsection{Time Until Death After Last Passage Time}\label{subsec:4-general}
We want to find $\p_y\{(T_{\ell}-\lambda_\alpha) \in \diff t\}$ (or $\p_y\{(T_{\ell}-\lambda_\alpha) \in \diff t, T_{\ell}<\infty\}$ for case 2). However, since $T_{\ell}$ and $\lambda_\alpha$ are far from independent under $\p_y$, it is not easy to compute this distribution. To bypass this difficulty, we consider the reversed path of the original diffusion (or of the original diffusion conditioned on the event $T_{\ell}<\infty$ for case 2) from $\ell$ and its first passage time to state $\alpha$. Note that at $T_{\ell}$, the original diffusion (or the conditioned diffusion) hits the state $\ell$ and is killed at $\ell$.
\begin{proposition}\label{prop:reversal}
	Let $X$ be a time-homogeneous regular diffusion process starting from $y \in \mathcal{I}=(\ell, r)$, $\ell$ being a regular killing boundary. Suppose that (Case 1) $s(\ell)>-\infty$ and $s(r)=+\infty$, or (Case 2) $s(\ell)>-\infty$ and $s(r)<+\infty$.  Then, in either case, given that $\lim\limits_{t\to\zeta}X_t=\ell$, the reversed process is the $k_r$-transform starting from $\ell$. That is, the two processes $\{X_{T_{\ell}-t}, 0\leq t\leq T_{\ell}\}$ and $\{X^{k_r}_t,  0\leq t\leq L_y^r\}$ have the same law with $L_y^r=\sup\{t:X_t^{k_r}=y\}$.
\end{proposition}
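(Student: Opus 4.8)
The plan is to derive the statement from the general time-reversal theorem for transient Markov processes (Sharpe \cite{sharpe1980}, Chung--Walsh \cite{chung-walsh}), specialized to one-dimensional diffusions, whose crucial simplification is \emph{self-duality}: the transition density of $X$ with respect to its speed measure is symmetric, $p(t;u,v)=p(t;v,u)$. Given $\lim_{t\to\zeta}X_t=\ell$, death occurs at $\ell$, so $T_\ell=\zeta$ almost surely on this event and reversing from $T_\ell$ is reversal from the lifetime. In Case~1 this event has full probability and no conditioning is needed; in Case~2 I would first condition on $\{T_\ell<T_r\}$, replacing $X$ by the conditioned diffusion $X^*$ with parameters \eqref{eq:star-parameters}, which again reaches $\ell$ with probability one. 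Thus both cases reduce to reversing, from its death time, a self-dual diffusion killed at $\ell$.

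Next I would identify the reversed semigroup. The reversal theorem gives that $\{X_{T_\ell-t}\}$ is Markov and, by self-duality, its transition kernel is a Doob $h$-transform of $P(t;u,\diff v)=p(t;u,v)\,m(\diff v)$; the function $h$ is dictated by the entrance behaviour at $\ell$ and equals $k_r(x)=s(x)-s(\ell)$. Concretely I would verify
\begin{equation*}
\frac{k_r(v)}{k_r(u)}\,p(t;u,v)\,m(\diff v)=p^{k_r}(t;u,v)\,m^r(\diff v),
\end{equation*}
with $m^r$ and $p^{k_r}$ as in \eqref{eq:kr-parameters-pre}--\eqref{eq:kr-density-evaluated-pre}, so that the reversed transition density is exactly that of the $k_r$-transform. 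In Case~2 the same computation, carried out for $X^*$ with its own scale increment $s^*(x)-s^*(\ell)$, collapses to the identical kernel $P^{k_r}(t;u,\diff v)$; this algebraic coincidence is what makes the conclusion uniform across the two cases.

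Finally I would pin down the initial point and the time horizon. At reversed time $0$ the process sits at $X_{T_\ell}=\ell$; since $s^r(\ell)=-\infty$ (because $s(x)-s(\ell)\downarrow 0$ as $x\downarrow\ell$), the point $\ell$ is an entrance boundary for $X^{k_r}$, so ``starting from $\ell$'' is meaningful and is precisely the entrance law produced by the reversal theorem. At reversed time $T_\ell$ the process equals $X_0=y$. Because time reversal carries the \emph{first} visit of $X$ to $y$ (at time $0$) into the \emph{last} visit of the reversed path to $y$, running the reversed process over $[0,T_\ell]$ is the same as running $X^{k_r}$ from $\ell$ up to $L_y^r=\sup\{t:X^{k_r}_t=y\}$; transience of $X^{k_r}$ toward $r$ and regularity guarantee that $y$ is visited and $L_y^r<\infty$ a.s. The dynamics and the entrance law are independent of $y$, which enters only through the horizon $L_y^r$. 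Matching the finite-dimensional distributions then yields the claimed identity in law.

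I expect the main difficulty to lie in the second step, namely justifying reversal at the random death time $T_\ell$ when $\ell$ is regular-with-killing and rigorously computing the entrance law there: one must check the hypotheses of the reversal theorem (transience, finiteness of the Green function, cooptionality of $T_\ell$, right-continuity of the reversed path) and confirm that the entrance law at the regular killing boundary is the non-degenerate one corresponding to $h=k_r$ rather than to $k_\ell$. If invoking the general theorem feels too heavy, the same conclusion can be reached by a direct finite-dimensional computation: conditioning on the death epoch and using the Markov property, the product of forward transition factors, reorganized via self-duality, telescopes to the Radon--Nikodym factor $k_r(v)/k_r(u)$, the boundary terms supplying the entrance law at $\ell$. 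As already noted for Proposition \ref{prop:1}, the killing assumption is not actually essential to this computation.
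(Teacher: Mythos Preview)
Your plan---self-duality plus a Nagasawa-type reversal, with Case~2 reduced to Case~1 via the conditioned diffusion $X^*$---is sound, and Case~2 is handled exactly as the paper does. But you reverse in the \emph{opposite direction} from the paper, and in doing so you have misidentified the $h$-function.

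The paper applies Nagasawa's theorem not to $X$ but to $X^{k_r}$, started from $\ell$ and reversed at the cooptional time $L_y^r$. With initial law $\delta_\ell$, the potential is $v^r(z)=G^r(\ell,z)=s^r(r)-s^r(z)$, a single formula valid for every $z\in\mathcal{I}$. They then compute the potential operator of the reversed process explicitly, read off its scale and speed, and check that these coincide with $s$ and $m$; since both processes die only at $\ell$, this identifies the reversal of $X^{k_r}$ with $X$.

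In your direction---reversing $X$ at $T_\ell$ with initial law $\delta_y$---Nagasawa gives $v(z)=G(y,z)$, which equals $k_r(z)=s(z)-s(\ell)$ only for $z\le y$ and is the constant $s(y)-s(\ell)$ for $z\ge y$. Hence the reversed process is the $G(y,\cdot)$-transform of $X$, i.e.\ the $k_y$-transform, \emph{not} the $k_r$-transform. Your displayed identity $\tfrac{k_r(v)}{k_r(u)}\,p(t;u,v)\,m(\diff v)=p^{k_r}(t;u,v)\,m^r(\diff v)$ is merely the definition of $p^{k_r}$ and does not show that the reversed semigroup has $h=k_r$; and your claim that ``the dynamics are independent of $y$'' fails above $y$, where the $k_y$-transform moves like $X$ itself, not like $X^{k_r}$. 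To close the gap you must identify the $k_y$-transform entered from $\ell$ with $X^{k_r}$ entered from $\ell$ and killed at $L_y^r$; this is exactly the identity $(\p^{k_r})^{k_y}=\p^{k_y}$ appearing as \eqref{eq:kx-transforms} in the proof of Proposition~\ref{prop:1}. So your route can be completed, but only by borrowing that step; the paper's choice of direction avoids the piecewise Green function and the extra identification altogether. Incidentally, the ``main difficulty'' you anticipated---justifying reversal at the death time---is not where the problem lies: $T_\ell=\zeta$ is cooptional and the hypotheses of Nagasawa's theorem are routine here.
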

\begin{proof}[Proof of Proposition \ref{prop:reversal}]
	\underline{Case 1: } %We have $X=\{\omega_t, t\ge 0, \p_x\}$ with $\mathcal{I}=(l, r)$; $l$ is a regular boundary with killing and $r$ is a natural boundary. $s(r)=\infty$ and $s(l)>-\infty$.
	We confirm that $X^{k_r}$ ($k_r$-transform starting at $\ell$) reversed at the last exit time to $y$ (i.e., $L_y^r$) is the original diffusion $X$. First, note that $k_r(x)=s(x)-s(\ell)$, and as we already have seen in the proof of the previous proposition, the transition density function of $X^{k_r}$ with respect to its speed measure is given by \eqref{eq:kr-density-evaluated-pre}, which we reproduce here
	\begin{equation*}
	p^{k_r}(t;y,x)=\frac{p(t;y,x)}{(s(y)-s(\ell))(s(x)-s(\ell))}.
	\end{equation*}\label{eq:kr-density}
	The scale function $s^r$ and the speed measure $m^r$ of the $X^{k_r}$ are also written as in \eqref{eq:kr-parameters-pre}:
	\begin{equation}\label{eq:kr-parameters-again}
	s^r(x)=-\frac{1}{s(x)-s(\ell)} \conn m^r(\diff x)=(s(x)-s(\ell))^2 m(\diff x).
	\end{equation}
	Note that $s^r(\ell)=-\infty$. In terms of $s^r$, the Green function is
	\begin{align}\label{eq:G*}
	G^r(y, z)=\begin{cases}
	s^r(r)-s^r(z), & y\le z,\\
	s^r(r)-s^r(y), & z\le y.
	\end{cases}
	\end{align}
	Recall that Nagasawa's theorem on time reversal in this context reads as follows: (see \citet{nagasawa} and \citet{sharpe1980})
	\begin{theorem}\label{thm:nagasawa}
		Let $X$ and $\widehat{X}$ be standard Markov processes in duality on their common state space $E$ relative to a $\sigma$-finite reference measure $\xi$. Let $u(x, y)$ be the potential kernel density relative to $\xi$ so that $Uf(x)=\E^x \ii f(X_t)\diff t=\int u(x, y) f(y)\xi(\diff y)$.  Let $L$ be a cooptional time for $X$; that is, $L\circ\theta(t):=(L-t)^+$, where $\theta(\cdot)$ is the shift operator. Denote $\widetilde{X}$ by
		\begin{align*}
		\widetilde{X}_t:=\begin{cases}
		X_{(L-t)-}, & \text{for $0<t<L$ on $\{0<L<\infty\}$},\\
		\Delta, &\text{otherwise}.
		\end{cases}
		\end{align*}
		For an initial distribution $\lambda$, let $v(y):=\int \lambda(\diff x)u(x, y)$. Then, under $\p^\lambda$, the reversed process $(\widetilde{X}_t)_{t>0}$ is a homogeneous Markov process on $E$ with transition semigroup $(\widetilde{P}_t)$ given by
		\begin{align*}
		\widetilde{P}_t f(y)=\begin{cases}
		\widehat{P}_t(fv)(y)/v(y), & 0<v(y)<\infty,\\
		0, &\text{otherwise}.
		\end{cases}
		\end{align*}
	\end{theorem}
	We use Theorem \ref{thm:nagasawa} to continue the proof of Proposition \ref{prop:reversal}. In our case, $X^{k_r}$ is self-dual relative to its speed measure $m^r$. %and the potential operator is $G^r$.
	Read the theorem with $E=[\ell,r)$, $\xi(\diff y)=m^r(\diff y)$, $L=L_y^r$, $u(x,y)=G^r(x,y)$, and set $\lambda(\diff x)=\delta_{\ell}(\diff x)$, which is the Dirac measure with total mass at $\ell$, to obtain
	\[v^r(z)=\int_{\ell}^{r} \delta_{\ell}(\diff y)G^r(y, z)=G^r(\ell, z)=s^r(r)-s^r(z)\]
	since $\ell<z, \forall z\in \mathcal{I}$ in \eqref{eq:G*}. For a nonnegative Borel measurable function $f$, the potential operator of $\tilde{X}^{k_r}$ (the reversed process of $X^{k_r}$), denoted by $V$, is written as
	\begin{align*}
	&Vf(y)=\frac{U(fv^r)(y)}{v^r(y)}=\int_{\ell}^{r} G^r(y,z)\frac{v^r(z)}{v^r(y)}f(z)m^r(\diff z)\\
	&=\int_{\ell}^y (s^r(r)-s^r(y))\frac{s^r(r)-s^r(z)}{s^r(r)-s^r(y)}f(z)m^r(\diff z)+\int_y^r (s^r(r)-s^r(z))\frac{s^r(r)-s^r(z)}{s^r(r)-s^r(y)}f(z)m^r(\diff z)\\
	&=\int_{\ell}^{r} G^{**}(y, z)f(z)(s^r(r)-s^r(z))^2m^r(\diff z)
	%%&=\int_{\mathcal{I}} G^{**}(y, z)f(z)m(\diff z)
	\end{align*} where
	\begin{align*}
	G^{**}(y, z):=\begin{cases}
	\frac{1}{s^r(r)-s^r(z)}, & z\le y, \\
	\frac{1}{s^r(r)-s^r(y)}, &y\le z.
	\end{cases}
	\end{align*}
	By \eqref{eq:kr-parameters-again} and our assumption $s(r)=+\infty$, we have $s^r(r)=0$, and this simplifies $Vf(y)$ to
	\begin{equation}\label{eq:potential-V}
	Vf(y)=\int_{E} \left(-\frac{1}{s^r(z)}\right)\wedge \left(-\frac{1}{s^r(y)}\right)f(z)s^r(z)^2m^r(\diff z).
	\end{equation}
	This reversed transform $\tilde{X}^{k_r}$ dies only at $\ell$. Now, take any $\ell<y<a<b<\infty$. Using the Markov property, we have
	\begin{align*}
	\nonumber
	V\1_{(a,b)}(y)&=\E_y\left[\int_0^{\infty}\1_{(a,b)}(\tilde{X}_t)\diff t\right]=\E_y\left[\int_{\tilde{T}_a}^{\infty}\1_{(a,b)}(\tilde{X}_t)\1_{\tilde{T}_a<\tilde{T}_{\ell}}\diff t\right]=\int_{\tilde{T}_a}^{\infty}\E_y\left[\1_{(a,b)}(\tilde{X}_t)\1_{\tilde{T}_a<\tilde{T}_{\ell}}\right]\diff t\\
	&=\int_{\tilde{T}_a}^{\infty}\E_y\left[\1_{\tilde{T}_a<\tilde{T}_{\ell}}\E_{\tilde{X}_{\tilde{T}_a}}\left(\1_{(a,b)}\left(\tilde{X}_{t-\tilde{T}_a}\right)\right)\right]\diff t=\int_{\tilde{T}_a}^{\infty}\E_y\left[\1_{\tilde{T}_a<\tilde{T}_{\ell}}\right]\E_a\left[\1_{(a,b)}\left(\tilde{X}_{t-\tilde{T}_a}\right)\right]\diff t\\
	&=\E_y\left[\1_{\tilde{T}_a<\tilde{T}_{\ell}}\right]V\1_{(a,b)}(a).
	\end{align*}
	Hence, $\E_y\left[\1_{\tilde{T}_a<\tilde{T}_{\ell}}\right]=\frac{V\1_{(a,b)}(y)}{V\1_{(a,b)}(a)}=\dfrac{-\frac{1}{s^r(y)}}{-\frac{1}{s^r(a)}}$ from \eqref{eq:potential-V} by noting that $y<a$. Indeed,
	\begin{align*}
	\frac{V\1_{(a,b)}(y)}{V\1_{(a,b)}(a)}&=\frac{\left(-\frac{1}{s^r(y)}\right)\int_\mathcal{I}\1_{(a,b)}(z)s^r(z)^2 m(\diff z)}{\left(-\frac{1}{s^r(a)}\right)\int_\mathcal{I}\1_{(a,b)}(z)s^r(z)^2 m(\diff z)},
	\end{align*}
	since $s^r(\cdot)$ is monotone increasing.  %Similarly, using $y>a>b$, we get $\E_y\left[\1_{\tilde{T}_a<\tilde{T}_l}\right]=1$.
	Now, by \eqref{eq:kr-parameters-pre}, we can write that for $\ell<y<a$,
	\begin{align*}
	\E_y\left[\1_{\tilde{T}_a<\tilde{T}_{\ell}}\right]=\frac{-\frac{1}{s^r(y)}}{-\frac{1}{s^r(a)}}=\frac{s(y)-s(\ell)}{s(a)-s(\ell)},
	\end{align*} which shows that the scale function of $\tilde{X}^{k_r}$ is $s(\cdot)$ and coincides with that of our original diffusion $X$. Also, $(s^r(z))^2m^r(\diff z)$ becomes the speed measure of $\tilde{X}^r$ by definition. Note that $(s^r(z))^2m^r(\diff z)=m(\diff z)$ as expected, and the speed measure coincides with the speed measure of the original diffusion. Since both $X$ and $\tilde{X}^{k_r}$ are killed only at $\ell$ and their scale functions and speed measures coincide, we conclude that the reversed process from $\ell$ of our original diffusion $X$ is its $k_r$-transform starting from $\ell$.
	
	\underline{Case 2:} Let us consider the case $s(\ell)>-\infty$ and $s(r)<+\infty$. We use the conditioned diffusion $X^*$ from the proof in Proposition \ref{prop:1} (see Remark \ref{rmrk-x-star}). For this diffusion, $s^*(r)=\infty$ and $s^*(\ell)>-\infty$. Therefore, we can use the same arguments as in the proof of Case 1, and we conclude that the $k_r$-transform of this conditioned diffusion started at $\ell$ and the conditioned diffusion itself are reversals of each other.
\end{proof}

By Proposition \ref{prop:reversal}, the original problem $\p_y\{(T_{\ell}-\lambda_\alpha) \in \diff t\}$ is now converted to the \emph{first passage time of $k_r$-diffusion} (starting at $\ell$) to level $\alpha$. While the first passage time distribution may not be always available, it is much more convenient than handling the joint density of $\lambda_\alpha$ and $T_\ell$ in the original problem.

For the leverage process in Section \ref{subsec:2-2}, we want to find $\p_y\{(T_c-\lambda_\alpha) \in \diff t\}$ (or $\p_y\{(T_c-\lambda_\alpha) \in \diff t, T_c<\infty\}$ when $\mu>0$). This distribution should be useful in knowing how long the firm would have for implementing its measures to avoid insolvency.
\begin{corollary}\label{coro:2}
	Let $X_t$ be a Brownian motion with drift $\mu\neq 0$ and unit variance on $\mathcal{I}=(c,\infty)$, with $c$ being a regular killing boundary. (a) When $\mu<0$, the reversed process of $X_t$ from $L_c=\inf\{t : X_t=c\}$ has the generator $\tilde{\G}$
	\begin{equation}\label{eq:generator}
	\tilde{\G}f(x)=\frac{\mu\left(e^{2\mu(x-c)}+1\right)}{e^{2\mu(x-c)}-1}f'(x)+\frac{1}{2}f''(x)
	\end{equation} and the transition density with respect to speed measure $\tilde{m}(\diff v)=\left(e^{-2\mu c}-e^{-2\mu v}\right)\left(\frac{1}{2\mu}\right)^22e^{2\mu v}\diff v$,
	\[\tilde{p}(t; u, v)=\frac{\dfrac{1}{2\sqrt{2\pi t}}\exp\left(-\mu(u+v)-\dfrac{\mu^2t}{2}\right)\times \left(\exp\left(-\dfrac{(u-v)^2}{2t}\right)-\exp\left(-\dfrac{(u+v-2c)^2}{2t}\right)\right)}{\left(e^{-2\mu c}-e^{-2\mu v}\right)\left(e^{-2\mu c}-e^{-2\mu u}\right)\left(\frac{1}{2\mu}\right)^2}\]
	for $u,v>c$ and
	\[\tilde{p}(t; c, v)=\frac{2\mu}{\sqrt{2\pi t}}\frac{e^{-\mu(v-c)-\frac{\mu^2t}{2}}\frac{v-c}{t}e^{-\frac{(v-c)^2}{2t}}}{e^{-2\mu c}-e^{-2\mu v}}\] for $u=c$.
	Note also that the distribution of $T_c-\lambda_\alpha$ under $\p_y$ is the same as that of $T_{\alpha}:=\inf\{t\ge 0: \omega(t)=\alpha\}$ under $\tilde{\p}_c$.\\
	(b) When $\mu>0$, the reversed process of $X_t^*$ ($X_t$ conditioned to hit $c$ in finite time and be killed at time $T_c$) from $L^*_c=\inf\{t : X^*_t=c\}$ has the generator $\tilde{\G}$
	\begin{equation*}\label{eq:generator-2}
	\tilde{\G}f(x)=\frac{\mu\left(e^{2\mu(x-c)}+1\right)}{e^{2\mu(x-c)}-1}f'(x)+\frac{1}{2}f''(x)
	\end{equation*} and the transition density with respect to speed measure $\tilde{m}(\diff v)=\frac{e^{-2\mu c}-e^{-2\mu v}}{e^{-4\mu c}}2e^{2\mu v}\diff v$
	\[\tilde{p}(t; u, v)=\frac{\dfrac{1}{2\sqrt{2\pi t}}\exp\left(-\mu(u+v)-\dfrac{\mu^2t}{2}\right) \left(\exp\left(-\dfrac{(u-v)^2}{2t}\right)-\exp\left(-\dfrac{(u+v-2c)^2}{2t}\right)\right)e^{-4\mu c}}{\left(e^{-2\mu c}-e^{-2\mu u}\right)\left(e^{-2\mu c}-e^{-2\mu v}\right)}\]
	for $u,v>c$ and
	\[\tilde{p}(t; c, v)=\frac{1}{2\mu\sqrt{2\pi t}}e^{-\mu(c+v)-\frac{\mu^2t}{2}}\frac{v-c}{t}e^{-\frac{(v-c)^2}{2t}}\frac{1}{\left(1-e^{-2\mu(v-c)}\right)}\] for $u=c$.
	Note also that the distribution of $T_c-\lambda_\alpha$ under $\p_y$ when $T_c<\infty$ is the same as that of $T_{\alpha}:=\inf\{t\ge 0: \omega(t)=\alpha\}$ under $\tilde{\p}_c$ multiplied by $\p_y(T_c<\infty)$.
\end{corollary}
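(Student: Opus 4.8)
The plan is to invoke Proposition \ref{prop:reversal} directly and then make every object explicit for the Brownian motion with drift, reading off its scale, speed, density, and generator. Recall from \eqref{scale-function} that $s(x)=\frac{1}{2\mu}(1-e^{-2\mu x})$, so $s(x)-s(c)=\frac{1}{2\mu}(e^{-2\mu c}-e^{-2\mu x})$, while by Corollary \ref{coro:1} the killed transition density $p(t;u,v)$ is \eqref{eq:original-transition-density} with respect to $m(\diff v)=2e^{2\mu v}\diff v$. For part (a), $\mu<0$ puts us in Case 1 ($s(\infty)=+\infty$), so on $\{\lim_{t\to\zeta}X_t=c\}$ Proposition \ref{prop:reversal} identifies the reversal of $X$ from $L_c=T_c$ with the $k_r$-transform $X^{k_r}$ started at $c$. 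For part (b), $\mu>0$ is Case 2 ($s(\infty)=\frac{1}{2\mu}$); here I would first pass to the conditioned diffusion $X^*$ (killing in finite time, $T_c<T_r$), whose scale and speed are \eqref{eq:star-parameters}, and then reverse $X^*$, which by the Case 2 argument of Proposition \ref{prop:reversal} is again a $k_r$-transform started at $c$ — now the $k_r$-transform of $X^*$.

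With the reversal identified, the scale function, speed measure and symmetric transition density for $u,v>c$ are read off from \eqref{eq:kr-parameters-pre} and \eqref{eq:kr-density-evaluated-pre} (in part (b), applied to $X^*$ with $s^*,m^*,p^*$ replacing $s,m,p$), and substituting $s(x)-s(c)=\frac{1}{2\mu}(e^{-2\mu c}-e^{-2\mu x})$ yields $\tilde{p}(t;u,v)$ after routine algebra. The generator $\tilde{\G}$ I would obtain from the harmonicity of $k_r$: since $\G k_r=0$, the $h$-transform generator collapses to $\tilde{\G}f=\frac{1}{k_r}\G(k_r f)=\frac{\mu k_r+k_r'}{k_r}f'+\frac12 f''$; with $k_r(x)=\frac{1}{2\mu}(e^{-2\mu c}-e^{-2\mu x})$ and $k_r'(x)=e^{-2\mu x}$ the drift reduces to $\mu\frac{e^{-2\mu c}+e^{-2\mu x}}{e^{-2\mu c}-e^{-2\mu x}}=\mu\frac{e^{2\mu(x-c)}+1}{e^{2\mu(x-c)}-1}=\mu\coth(\mu(x-c))$, which is \eqref{eq:generator}. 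In part (b) I expect the same form, verified either through the double $h$-transform (by $k_\ell$ then $k_r$) or, more quickly, directly from the scale and speed of the $k_r$-transform of $X^*$, checking that the diffusion coefficient reduces to $\frac12$ and the drift to the identical expression.

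The one genuinely nonroutine step is the density started from the boundary, $\tilde{p}(t;c,v)$, because the $k_r$-transform starts at $c$, where both $p(t;u,v)$ and the normalizer $s(u)-s(c)$ vanish, producing a $0/0$ limit as $u\downarrow c$. I would resolve it by a first-order expansion in $\eps=u-c$: from \eqref{eq:original-transition-density} the bracketed difference $e^{-(u-v)^2/2t}-e^{-(u+v-2c)^2/2t}$ vanishes at $u=c$ with leading term $e^{-(v-c)^2/2t}\frac{2(v-c)}{t}\eps$, while $e^{-2\mu c}-e^{-2\mu u}\sim 2\mu e^{-2\mu c}\eps$; the factors of $\eps$ cancel and one is left with the stated $\tilde{p}(t;c,v)$ (the analogous expansion, carrying the extra $e^{-4\mu c}$ factor arising from $s^*$, handles part (b)).

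Finally, for the distributional identity I would argue pathwise. On $\{\lim_{t\to\zeta}X_t=c\}$ the reversed path $\tilde{X}_u=X_{T_c-u}$, $0\le u\le T_c$, starts at $c$ and, by Proposition \ref{prop:reversal}, has law $\tilde{\p}_c$. Since $\lambda_\alpha$ is the \emph{last} time $X$ visits $\alpha$, the instant $u=T_c-\lambda_\alpha$ is the \emph{first} time $\tilde{X}$ visits $\alpha$, so $T_c-\lambda_\alpha=\inf\{u:\tilde{X}_u=\alpha\}=T_\alpha$ in law under $\tilde{\p}_c$, giving part (a). For part (b) the same holds for $X^*$, i.e. after conditioning on $\{T_c<\infty\}$; removing the conditioning multiplies the resulting subprobability law of $T_c-\lambda_\alpha$ by $\p_y(T_c<\infty)$, producing the stated factor. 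The main obstacle is this boundary-density limit and, secondarily, the bookkeeping in part (b) of composing the conditioning with the $k_r$-transform while checking that the generator still collapses to \eqref{eq:generator}.
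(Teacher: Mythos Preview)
Your proposal is correct and follows essentially the same route as the paper: invoke Proposition~\ref{prop:reversal} to identify the reversal with the $k_r$-transform (of $X$ in Case~1, of $X^*$ in Case~2), read off $s^r$, $m^r$, and $p^r$ from \eqref{eq:kr-parameters-pre}--\eqref{eq:kr-density-evaluated-pre}, obtain the entrance density at $c$ as a $0/0$ limit, and identify $T_c-\lambda_\alpha$ with the first hitting time of the reversed process. The only notable variation is in how you compute the generator: the paper derives $\tilde{\G}$ from the scale--speed representation $\tilde{\G}f=\frac{1}{\bar m^r}\frac{\diff}{\diff x}\bigl(\frac{1}{(s^r)'}f'\bigr)$, whereas you use the $h$-transform identity $\tilde{\G}f=\frac{1}{k_r}\G(k_r f)$ together with $\G k_r=0$, which is a slightly cleaner one-line route to the same drift $\mu\coth(\mu(x-c))$; similarly, your first-order expansion in $u-c$ is just L'H\^opital's rule written out explicitly.
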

\begin{proof}[Proof of Corollary \ref{coro:2}]
	As it can be seen from the proof of Proposition \ref{prop:reversal}, if we reverse the original process $X$ (or $X^*$) from $L_c=\inf\{t : X_t=\ell\}$, we obtain the $k_r$-transform.
	
	\underline{(a) $\mu<0$:  } This corresponds to Case 1 in Proposition \ref{prop:reversal}. We are interested in finding the $k_r$-transform of Brownian motion with drift $\mu<0$ killed at $c$. From \eqref{eq:kr-parameters-pre}, we obtain the following with \eqref{scale-function}:
	\[ s^r(x)=-\frac{1}{\frac{1}{2\mu}(e^{-2\mu c}-e^{-2\mu x})} \conn m^r(\diff x)=\underset{:=\bar{m}^r(x)}{\underbrace{\left(\frac{1}{2\mu}(e^{-2\mu c}-e^{-2\mu x})\right)^22e^{2\mu x}}}\diff x.\]
	
	Since $\tilde{\G} f(x)=\frac{1}{\bar{m}^r(x)}\frac{\diff}{\diff x}\left(\frac{1}{(s^r(x))'}\frac{\diff f(x)}{\diff x}\right)$ (e.g., see \citet[Chapter 15, Sec.3]{karlin-book}), we simply use $s^r(\cdot)$ and $\bar{m}^r(\cdot)$ to obtain \eqref{eq:generator}. The transition density with respect to its speed measure is obtained by \eqref{eq:kr-density-evaluated-pre}. The entrance law from $c$ is due to L'H\^{o}pital's rule. Note that the drift parameter becomes very large when the $k_r$-diffusion approaches $c$, so that the process stays away from $c$. This is also confirmed by the fact that $s^r(c)=-\infty$: once the $k_r$-diffusion enters from $c$, it can never reach $c$. By this fact and the definition of $\lambda_\alpha$, we have the final assertion.
	
	\underline{(b) $\mu>0$: } This corresponds to Case 2 in Proposition \ref{prop:reversal}. Here, we are interested in the $k_r$-transform of the Brownian motion with drift $\mu>0$ conditioned to hit $c$ in finite time. This conditional diffusion is the same as $X^*$ from the proof of Proposition \ref{prop:1}. Use the equation for $s^r(x)$ in \eqref{eq:kr-parameters-pre} but replace $s(x)$ with $s^*(x)$ in \eqref{eq:star-parameters} to obtain \[s^r(x)=-\frac{e^{-2\mu(x+c)}}{2\mu(e^{-2\mu c}-e^{-2\mu x})} \conn m^r(\diff x)=\underset{:=\bar{m}^r(x)}{\underbrace{\left(1-e^{-2\mu(x-c)}\right)^22e^{2\mu x}}}\diff x.\]
	All other assertions are derived in the same way as in Case 1.
\end{proof}

In the case of a Brownian motion with drift, we obtain a semi-explicit expression of the density of the time left until killing (after the last passage time to $\alpha$) in terms of its Laplace transform. We can apply this result to leverage process in Section \ref{subsec:2-2}. Note that without loss of generality, we have assumed the unit variance in Proposition \ref{prop:additional} since it is possible to standardize variance to one by adjusting $\alpha$ of interest (see Section \ref{subsec:2-2}).
\begin{proposition} \label{prop:additional}
	Given that $T_c$ is finite, the time until insolvency after the last passage time to $\alpha$ for the Brownian motion $X_t$ with drift $\mu=\frac{\nu-r}{\sigma}\neq 0$ and unit variance is given by
	\begin{equation}\label{eq:answer-to-problem2}
	\p_.(T_c-\lambda_\alpha\in \diff t)=\lim_{x\downarrow c}\frac{\sinh(\mu(\alpha-c))}{\sinh(\mu(x-c))}\cdot  e^{-\frac{1}{2}\mu^2 t}\p_x(H_\alpha\in \diff t, t<H_c),
	\end{equation} where
	\[
	H_\alpha:=\inf\{t\ge 0: B_t=\alpha\}.
	\] $B=(B_t)_{t\ge 0}$ is a standard Brownian motion under $\p_x$, and
	$\p_x(H_\alpha\in \diff t, t<H_c)$ has the Laplace transform
	\begin{equation*}%\label{eq:Laplace-for-prop}
	\int_0^\infty e^{-qt}\p_x(H_\alpha\in \diff t, t<H_c)=\frac{\sinh (\sqrt{2q}(x-c))}{\sinh(\sqrt{2q}(\alpha-c))}, \quad q>0.
	\end{equation*}
\end{proposition}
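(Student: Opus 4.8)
The plan is to combine Corollary \ref{coro:2} with a Doob $h$-transform identification that reduces the first passage problem for the reversed diffusion to a hitting-time computation for a \emph{standard} Brownian motion. By Corollary \ref{coro:2}, in both the $\mu<0$ and $\mu>0$ regimes the law of $T_c-\lambda_\alpha$ (conditioned on $T_c<\infty$ when $\mu>0$, which contributes only the scalar factor $\p_y(T_c<\infty)$) coincides with the law of the first passage time $T_\alpha$ of the reversed process started from the entrance point $c$. Crucially, the reversed process is in either case the $k_r$-diffusion whose generator \eqref{eq:generator} is $\tilde{\G}f(x)=\mu\coth(\mu(x-c))f'(x)+\tfrac12 f''(x)$, upon noting $\tfrac{e^{2\mu(x-c)}+1}{e^{2\mu(x-c)}-1}=\coth(\mu(x-c))$. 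So it suffices to determine $\tilde{\p}_c(T_\alpha\in\diff t)$.

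First I would identify $\tilde{\G}$ as a Doob $h$-transform. Take $h(x)=\sinh(\mu(x-c))$, which satisfies $\tfrac12 h''=\tfrac12\mu^2 h$, i.e. $h$ is harmonic for the generator $L=\tfrac12\partial_{xx}-\tfrac12\mu^2$ of a standard Brownian motion killed at the constant rate $\tfrac12\mu^2$ (this holds for both signs of $\mu$, since $h''=\mu^2 h$ regardless). A direct computation gives $L(hf)/h=\tfrac{h'}{h}f'+\tfrac12 f''=\mu\coth(\mu(x-c))f'+\tfrac12 f''=\tilde{\G}f$, so the $k_r$-diffusion is exactly the $h$-transform of this killed Brownian motion; the harmonicity of $h$ accounts for the killing term being absorbed, consistent with the $k_r$-diffusion not dying in the interior, while $h(c)=0$ encodes that $c$ is an entrance boundary ($s^r(c)=-\infty$) that the transformed process never revisits.

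Next I would compute the first passage density from an interior start $x\in(c,\alpha)$. Since the $h$-transform has Radon--Nikodym density $h(X_t)/h(x)$ on $\{t<\zeta\}$ and the transformed process exits $(c,\alpha)$ only at $\alpha$, evaluating at the hitting time yields
\[
\tilde{\p}_x(T_\alpha\in\diff t)=\frac{h(\alpha)}{h(x)}\,\p_x^{L}(H_\alpha\in\diff t,\,H_\alpha<H_c)=\frac{\sinh(\mu(\alpha-c))}{\sinh(\mu(x-c))}\,e^{-\frac12\mu^2 t}\,\p_x(H_\alpha\in\diff t,\,H_\alpha<H_c),
\]
where in the last step I use that the constant-rate killing is independent of the Brownian path, contributing the factor $e^{-\frac12\mu^2 t}$, and $\p_x$ now refers to standard Brownian motion. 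Passing to the entrance law by letting $x\downarrow c$ gives exactly \eqref{eq:answer-to-problem2}. Finally, the Laplace transform of $\p_x(H_\alpha\in\diff t,\,H_\alpha<H_c)$ follows from the classical two-point exit problem: $g(x):=\E_x[e^{-qH_\alpha};\,H_\alpha<H_c]$ solves $\tfrac12 g''=qg$ on $(c,\alpha)$ with $g(c)=0$ and $g(\alpha)=1$, whence $g(x)=\sinh(\sqrt{2q}(x-c))/\sinh(\sqrt{2q}(\alpha-c))$.

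The main obstacle I anticipate is the justification of the entrance-law limit $x\downarrow c$: both the prefactor $1/\sinh(\mu(x-c))$ blows up and the sub-probability $\p_x(H_\alpha\in\diff t,\,H_\alpha<H_c)$ degenerates to zero as $x\downarrow c$, so the formula is genuinely a $0/0$ limit that must be read as the entrance law of the $k_r$-diffusion from its entrance boundary $c$. I would argue that, because $c$ is an entrance-but-not-exit boundary for the $k_r$-diffusion (the point already exploited in Corollary \ref{coro:2} via $s^r(c)=-\infty$ and L'H\^opital's rule for the entrance density), the hitting law started from $c$ is the weak limit of the interior hitting laws, so taking the limit inside is legitimate; keeping the answer in the displayed limit form then avoids evaluating the indeterminate ratio explicitly.
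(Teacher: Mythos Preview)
Your proposal is correct and follows essentially the same route as the paper: both identify the reversed $k_r$-diffusion (via Corollary \ref{coro:2}) as a measure change of standard Brownian motion governed by $h(x)=\sinh(\mu(x-c))$, obtain the same first-passage formula $\tilde{\p}_x(T_\alpha\in\diff t)=\dfrac{\sinh(\mu(\alpha-c))}{\sinh(\mu(x-c))}\,e^{-\mu^2 t/2}\,\p_x(H_\alpha\in\diff t,\,H_\alpha<H_c)$ for $x\in(c,\alpha)$, and then pass to the entrance limit $x\downarrow c$. The only difference is packaging: the paper carries out the change of measure explicitly via Girsanov, writing the exponential martingale $Z_t(B)$ and then applying It\^o's formula to $g(x)=\ln|\sinh(\mu(x-c))|$ to collapse the stochastic integral, whereas you shortcut this by recognising from the outset that $\tilde{\G}$ is the Doob $h$-transform of $\tfrac12\partial_{xx}-\tfrac12\mu^2$ with $h=\sinh(\mu(\cdot-c))$; your version is a little more direct and makes the role of the killing rate $\tfrac12\mu^2$ transparent, while the paper's Girsanov computation has the advantage of not presupposing familiarity with $h$-transforms of killed diffusions. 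One small cosmetic point: for $\mu<0$ your $h$ is negative on $(c,\infty)$, so strictly speaking the $h$-transform uses $|h|$ (the paper handles this by splitting into $\ln\sinh(\pm\mu(x-c))$); the ratio $h(\alpha)/h(x)$ and the logarithmic derivative $h'/h$ are unaffected, so the formula stands.
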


\begin{proof}[Proof of Proposition \ref{prop:additional}]
	First, note that the drift in \eqref{eq:generator} is $\dfrac{\mu\left(e^{2\mu(x-c)}+1\right)}{e^{2\mu(x-c)}-1}=\mu\coth(\mu(x-c))$, taking values in $\R_+$  for any $\mu\in \R\backslash\{0\}$ and $x>c$. The state space of $X$ is $\mathcal{I}=(c, \infty)$. By taking a point $x\in (c, \alpha)$, our reversed diffusion (see Corollary \ref{coro:2}) is
	\[
	\tilde{X}_t=X_0 +  \int_0^t \mu\coth(\mu(\tilde{X}_s-c)) \diff s + W_t, \quad X_0=x, \quad t\ge 0,
	\] where $W_t$ is a standard Brownian motion. This is the $k_r$-transform of $X$ conditioned on the event $\{T_c<\infty\}$. Following the result in Proposition \ref{coro:2}, we wish to compute the first passage time of $\tilde{X}$ (starting from $c$) to level $\alpha$, which is equal to $T_c-\lambda_\alpha$ for $X$ conditioned on $\{T_c<\infty\}$. For this, we use the following method of measure change.
	\newline\indent We take any standard Brownian motion $B_t$ defined on $(\Omega,\mathcal{F},\p)$ with $B_0=x$ a.s.. If necessary, we enlarge our original filtration to make $B$ adapted to that larger filtration. The following measure change works until any finite time $t$; therefore, we can choose to look at the path of $B_t$ until it hits $c$ for the first time, that is, $t<T_c^B$. We set
	\begin{equation}\label{eq:Z}
	Z_t=Z_t(B):=\exp\left(\int_0^t \mu\coth(\mu(B_s-c))\diff B_s -\frac{1}{2}\int_0^t \mu^2\coth^2(\mu(B_s-c))\diff s \right).
	\end{equation}
	Since $b(x)=\mu\coth(\mu(x-c))$ is bounded on $x\in [c+\epsilon, \infty)$ for any $\epsilon>0$ and any $\mu \in \R\backslash\{0\}$, the non-explosion condition (see \citet[Sec.9.4]{chung-williams}) is satisfied: there exists a positive constant $C$ such that
	\[
	x\cdot b(x)\le C(1+x^2), \quad x\in (c, \infty).
	\]  Then, $Z_t(B)$ is a martingale for $0\le t <T_c^B$ because $B$ never touches state $c$  for this time period once we start it at $x\in (c, \infty)$. For each $0\leq t<T_c^B$, we define a new probability measure $\widehat{\p}$ by the Radon--Nikodym derivative $Z_t=\dfrac{\diff \widehat{\p}_x}{\diff\p_x}\Bigr|_{\mathcal{F}_t}$. That is, for each $0\le S <T_c^B$, a probability measure $\widehat{\p}_x$ on $\F_S$ is given by
	\[
	\widehat{\p}_x(A):=\E_x[\1_A Z_S(B)], \quad A\in \F_S.
	\] By the Girsanov theorem, under the new measure,
	\[
	Y_t:=B_t-\int_0^t \mu\coth(\mu(B_s-c))\diff s
	\] is a standard Brownian motion, and $B_t$ is a Brownian motion with drift $b(B_s)=\mu\coth(\mu(B_s-c))$. Note that $B_t$ has the same state space $(c, \infty)$ and the same generator under $\widehat{\p}_x$ as $\tilde{X}_t$ under $\p_x$. Therefore, the distribution of $H_\alpha':=\inf\{t\ge 0: \tilde{X}_t=\alpha\}$ under $\p_x$ and $H_\alpha:=\inf\{t\ge 0: B_t=\alpha\}$ under $\widehat{\p}_x$ are the same:
	\begin{align}\label{eq:interim}
	&\widehat{\E}_x(\1_{H_{\alpha}\le t})=\widehat{\E}_x(\1_{H_{\alpha}\le t}\1_{H_{\alpha}<H_c})=\E_x[\1_{\{H_\alpha\le t\}}\1_{\{H_\alpha<H_c\}}Z_t]\nonumber\\
	&=\E_x\left[\1_{\{H_\alpha\le t\}}\1_{\{H_\alpha<H_c\}}\exp\left(\int_0^t \mu\coth(\mu(B_s-c))\diff B_s -\frac{1}{2}\int_0^t \mu^2\coth^2(\mu(B_s-c))\diff s \right)\right].
	\end{align} To simplify the above probability, define
	\begin{equation*}
	g(x)=\begin{cases}
	\ln\left( \sinh (-\mu(x-c))\right), \quad \mu<0 \quad x>c,\\
	\ln\left( \sinh (\mu(x-c))\right), \hspace{0.7cm}  \mu>0 \quad x>c.
	\end{cases}
	\end{equation*}  Then, in either case, we have for $x>c$
	\begin{align}\label{eq:derivatives-of-g}
	g'(x)=\mu \coth (\mu (x-c)) \conn g''(x)=\mu^2[1-\coth^2(\mu(x-c))].
	\end{align} By the It\^{o} formula, we have using \eqref{eq:derivatives-of-g}
	\begin{align*}
	&\int_0^t \mu\coth(\mu(B_s-c))\diff B_s\\
	&=\ln\left( \sinh (-\mu(B_t-c))\right)-\ln\left( \sinh(-\mu(x-c))\right)-\frac{1}{2}\int_0^t \mu^2[1-\coth^2(\mu(B_s-c))]\diff s
	\end{align*}when $\mu<0$ and
	\begin{align*}
	&\int_0^t \mu\coth(\mu(B_s-c))\diff B_s\\
	&=\ln\left( \sinh (\mu(B_t-c))\right)-\ln\left( \sinh(\mu(x-c))\right)-\frac{1}{2}\int_0^t \mu^2[1-\coth^2(\mu(B_s-c))]\diff s
	\end{align*}
	when $\mu>0$. Plugging this equation into \eqref{eq:interim} leads to a cancelation, and thus, we have
	\begin{align*}
	\widehat{\p}_x(H_\alpha\le t)&=[\sinh(\mu(x-c))]^{-1}\E_x\left[\1_{\{H_\alpha\le t\}}\1_{\{H_\alpha<H_c\}}\sinh(\mu(B_t-c))e^{-\frac{1}{2}\mu^2 t}\right]\\
	&=[\sinh(\mu(x-c))]^{-1}\E_x\left[\1_{\{H_\alpha\le t\}}\1_{\{H_\alpha<H_c\}}\sinh(\mu(\alpha-c))e^{-\frac{1}{2}\mu^2 H_\alpha}\right]
	\end{align*}
	where the last line is due to the optional sampling theorem for the martingale  \eqref{eq:Z} in view of \eqref{eq:interim}.  Write the density
	\begin{equation}\label{eq:p-hat-H}
	\widehat{\p}_x(H_\alpha\in \diff t)=M\cdot  e^{-\frac{1}{2}\mu^2 t}\p_x(H_\alpha\in \diff t, t<H_c), \quad M:=\frac{\sinh(\mu(\alpha-c))}{\sinh(\mu(x-c))}
	\end{equation} and compare with
	\begin{equation}\label{eq:Laplace}
	\E_x\left[e^{-qH_\alpha}\1_{\{H_\alpha<H_c\}}\right]=\frac{\sinh (\sqrt{2q}(x-c))}{\sinh(\sqrt{2q}(\alpha-c))}, \quad q>0
	\end{equation} (see e.g., \citet[Sec. 8.2]{Kyprianou_2014}) to see that $\p_x(H_\alpha\in \diff t, t<H_c)$ has the Laplace transform \eqref{eq:Laplace} with $q=\frac{1}{2}\mu^2$.
	%\begin{equation*}
	% \p_x(T_c-\lambda_\alpha\in \diff t)=M\cdot  e^{-\frac{1}{2}\mu^2 t}\p_x(H_\alpha\in \diff t, t<H_c)
	%\end{equation*} where $\p_x(H_\alpha\in \diff t, t<H_c)$ has the Laplace transform \eqref{eq:Laplace}.
	Note that we can confirm $\int_0^\infty\widehat{\p}_x(H_\alpha\in \diff t)=1$ for any $x\in (c, \infty)$ by integrating the right--hand side of \eqref{eq:p-hat-H} and using \eqref{eq:Laplace} with $q=\frac{1}{2}\mu^2$.  Finally, we use Proposition \ref{prop:reversal} to obtain \eqref{eq:answer-to-problem2}.
\end{proof}

\subsection{Endogenizing the Threshold}\label{sec:endogenizing}
In the previous sections, we calculated some functionals that involve $\lambda_\alpha$, the last visit to state $\alpha$ before the original process is killed. In this section, we wish to make level $\alpha$ endogenous: we obtain  this threshold  as a solution to a certain appropriate  optimization problem. Let us consider again a general diffusion $X$ taking values in $(\ell, r)$, with $\ell$ being a regular killing boundary and $r$ a natural boundary.

To form an optimization problem, it should be reasonable to assume the following:
\begin{enumerate}[leftmargin=*]
	\item If the level of $\alpha$ is too low, $X$ may hit $\ell$ shortly after it is below $\alpha$. From the perspective of risk management, this means that the firm may become insolvent shortly after it finds itself below the precautionary threshold. In this case, the management has missed out on a bad sign on a timely basis. Hence, the management wants an alarm early enough to implement some measures.
	\item The company as a whole wants to minimize the time spent below a precautionary threshold
	\[A_t:=\int_0^{T_{\ell}\wedge T_r\wedge t} \1_{(\ell, \alpha)}(X_s)\diff s.\]
	The creditors naturally want the company to operate above the precautionary threshold.  From the shareholders' point of view as well, the value of the investment in the company has decreased and they are at risk of losing the whole investment while the firm is operating below level $\alpha$. Also, it will be harder to receive dividends. The creditors and shareholders may have different points of view regarding the appropriate management strategies during the financial distress. We will discuss on behalf of which stakeholder the management acts later in Section \ref{subsubsec:alpha-effect} where we consider specific strategies.
\end{enumerate}
We here note that the parameter $\Gamma\in[0,1]$ in the subsequent optimization problem will represent the relative importance between the two quantities associated with (1) and (2). Using $\Gamma$, one can adjust the priority of the two quantities in a flexible way (see Subsection \ref{subsubsec:gamma-effect}).

For (1), we can consider the following probability for fixed $t$:
\begin{equation}\label{eq:first_quantity}
\p_y\left(Q_t=\lambda_{\alpha}, X_t\in(\ell,\alpha) \right)=\int_{\ell}^{\alpha}\frac{s(\alpha)-s(z)}{s(\alpha)-s(\ell)}\p_y(X_t\in\diff z),
\end{equation}
where $Q_t:=\sup\{s<t: X_s=\alpha\}$ (see Sections \ref{subsubsec:method} and \ref{subsubsec:evaluation}). The equality holds due to strong Markov property. This quantity is in line with $P_y(\lambda_{\alpha}>t|\F_t)$ in Remark \ref{rmrk:projection} while we are now considering the case $X_t<\alpha$ (here we replaced arbitrary $x$ in Remark \ref{rmrk:projection} with $\alpha$).  
The probability in \eqref{eq:first_quantity} indicates how likely it is that when the process is below $\alpha$ at time $t$, it will never recover to $\alpha$ and will hit the killing boundary. This quantity is an increasing function of $\alpha$. We naturally assume that at some point of time $t < T$, when the firm finds itself below the level $\alpha$ (i.e., $X_t\in(\ell,\alpha$)), it wants to minimize the probability that $Q_t$ is the last visit to $\alpha$ (i.e., $Q_t=\lambda_{\alpha}$). The firm can be sufficiently cautious, by setting alarming $\alpha$ at a high level (thus, raising the probability in \eqref{eq:first_quantity}), but the time spent below $\alpha$ (discussed in (2)) would be larger. In particular, for (2), we can consider the following Laplace transform
\begin{align*}\label{eq:opportunity-cost}
\E_y\left(e^{-q A_\infty}\right)&=\E_y\left(e^{-q \lim\limits_{t\to\infty}A_t}\right)=\E_y\left[e^{-q \int_0^{T_{\ell}\wedge T_r}\1_{(\ell, \alpha)}(X_s)\diff s}\right]\\
&=\lim_{t\rightarrow \infty} \E_y\left[e^{-q \int_0^{T_{\ell} \wedge T_r\wedge t}\1_{(\ell, \alpha)}(X_s)\diff s}\right],\qquad q>0,
\end{align*} where the equality holds due to the bounded convergence theorem. If we increase the level $\alpha$ (by making $\alpha$ closer to $y$), the quantity $A_\infty$ increases (and hence $\E_y\left(e^{-q A_\infty}\right)$ decreases).

Accordingly, for a given $t$, we set the optimization problem as a convex combination of the two terms:

\begin{equation}\label{eq:measure}
v(y; t):= \max_{\alpha\in [\ell, y]}\left[\Gamma\cdot\left(\p_y\left(Q_t=\lambda_{\alpha}, X_t\in(\ell,\alpha)\right)+\int_0^t\p_y\left(T_{\ell}\in\diff z\right)\right)+(1-\Gamma)\cdot\E_y\left(e^{-q A_\infty}, T_{\ell}<T_r\right)\right],
\end{equation}
where $\Gamma\in[0,1]$ indicates the relative importance between the two terms. The first term indicates the probability that up to time $t$, the process $X$ is already killed, or otherwise, $X_t$ is in $(\ell, \alpha)$ and it does not return to the level $\alpha$. From the risk management's point of view, we call the second term $\E_y\left(e^{-q A_\infty}, T_{\ell}<T_r\right)$ as \emph{financial distress}. This term is constructed based on the fact that our primary interest lies in the case where $T_{\ell}<T_r$. This corresponds to killing (insolvency) occurring in finite time. Note that
as for $\p_y\left(Q_t=\lambda_{\alpha}, X_t\in(\ell,\alpha)\right)$, the inclusion of the event $T_{\ell}<T_r$ in this probability would not make any difference. \\
\indent The equation \eqref{eq:measure} is constructed for a general diffusion $X$. We briefly summerize its implication for credit risk management. A higher $\alpha$ would increase the first probability term, thus increasing the sense of danger. This means that the management would be on alert and it will possibly try to avoid worsening the situation.
%\st{From the perspective of risk management, this is good, since in the most extreme case, where $\alpha=y$, the company would refrain from making new risky investments to make less volatile the company's asset value. }
However, the time spent below the precautionary threshold $A_{\infty}$ would be large as well, decreasing the second term. In summary, the management would want to set $\alpha$ to the level that would minimize the risk of insolvency; however, if $\alpha$ is too high, this would increase the time spent in financial distress. This is a tradeoff that we wanted to see.

\indent Now, $\E_y\left(e^{-q A_\infty}, T_{\ell}<T_r\right)$ is evaluated as (\citet[Sec. 4.1]{zhang2015-1} with $b\uparrow r$ and $a\downarrow \ell$)
\begin{equation*}\label{eq:zhang}
\E_y\left(e^{-q A_\infty},T_{\ell}<T_r\right)=
\begin{cases}
\begin{aligned}
\dfrac{(s(r)-s(y))s'(\alpha)}{(s(r)-s(\alpha))W_{q,1}(\alpha,\ell)+s'(\alpha)W_q(\alpha,\ell)}  \qquad &s(r)<\infty,\\
\hspace{1cm}\\
\dfrac{s'(\alpha)}{W_{q,1}(\alpha,\ell)} \qquad \qquad \qquad  \qquad & s(r)=\infty.\\
\end{aligned}
\end{cases}
\end{equation*}
$W_{q,1}$　is defined in \citet[Sec.2]{zhang2015-1} in the following manner. Let $\varphi$ and $\psi$ denote positive increasing and decreasing solutions of the o.d.e. $\G f(x)=qf(x)$ for $q>0$, where $\G$ is a generator of $X$. There exists a constant $\omega_q>0$ satisfying
\[\omega_q\cdot s'(x)=\varphi'(x)\psi(x)-\psi'(x)\varphi(x).\]
Setting
\[W_q(x,y)=\omega_q^{-1}\det\left[\begin{matrix}
\varphi(x) &\varphi(y)\\
\psi(x) & \psi(y)
\end{matrix}\right]\]
for $x,y\in\mathcal{I}$, $W_{q,1}(x,y)=\frac{\partial}{\partial x}W_q(x,y)$. \\
Thus, our optimization problem becomes
\begin{small}
	\begin{align}\label{eq:measure_new_1}
	&v(y; t):=\\
	&\max_{\alpha\in [\ell, y]}\left[\Gamma\cdot\left(\int_{\ell}^{\alpha}\frac{s(\alpha)-s(z)}{s(\alpha)-s(\ell)}\p_y(X_t\in \diff z)+\int_0^t\p_y\left(T_{\ell}\in\diff z\right)\right)+(1-\Gamma)\cdot \dfrac{(s(r)-s(y))s'(\alpha)}{(s(r)-s(\alpha))W_{q,1}(\alpha,\ell)+s'(\alpha)W_q(\alpha,\ell)}\right]\nonumber
	\end{align}
\end{small}
for $s(r)<\infty$, and
\begin{small}
	\begin{equation}\label{eq:measure_new_2}
	v(y; t):=\max_{\alpha\in [\ell, y]}\left[\Gamma\cdot\left(\int_{\ell}^{\alpha}\frac{s(\alpha)-s(z)}{s(\alpha)-s(\ell)}\p_y(X_t\in \diff z)+\int_0^t\p_y\left(T_{\ell}\in\diff z\right)\right)+(1-\Gamma)\cdot \dfrac{s'(\alpha)}{W_{q,1}(\alpha,\ell)}\right]
	\end{equation}
\end{small}
for $s(r)=\infty$.

\begin{remark}\normalfont
	Note that our formulation here is just an example of how $\alpha$ can be chosen for credit risk management. It is up to the party of interest (management) to decide which quantities to use for representing tradeoffs.  Other quantities of interest  include the following: we can consider $\int_0^t \p_y(\lambda_\alpha\in \diff s)$ for a fixed $t$ and $y>\alpha$ from Section \ref{subsec:last-passage}. The quantity $\p_y(\lambda_\alpha\in\diff t)$ decreases as $\alpha$ decreases (i.e., when $\alpha$ approaches $\ell$). This is easily checked if we consider $T_{\alpha}^{\tilde{X}}$ for the reversed process $\tilde{X}$. Furthermore, we can also use the time left until death after the last passage time to $\alpha$, which is equal to the first passage time to $\alpha$ for the $k_r$-transform of $X$ (or $X^*$) (see Propositions \ref{prop:reversal} and \ref{prop:additional}). The company management would want to make this quantity longer, since this would give them some time to recover.
\end{remark}

When $X_t$ is a Brownian motion with drift $\mu$ and unit variance on $(c,\infty)$ with $c$ being a regular killing boundary, \eqref{eq:measure_new_1} and \eqref{eq:measure_new_2} become
\begin{normalsize}
	\begin{equation}\label{eq:objective_positive}
	\begin{aligned}
	v(y; t):= \max_{\alpha\in [c, y]}&\bigg[\Gamma\bigintsss_c^{\alpha}\frac{e^{-2\mu z}-e^{-2\mu\alpha}}{e^{-2\mu c}-e^{-2\mu\alpha}}\frac{e^{-\mu(y-z)-\frac{\mu^2t}{2}}}{\sqrt{2\pi t}}\left(e^{-\frac{(y-z)^2}{2t}}-e^{-\frac{(y+z-2c)^2}{2t}}\right)\diff z+\int_0^t\frac{|c-y|}{\sqrt{2\pi u^3}}e^{-\frac{(c-y-\mu u)^2}{2u}}\diff u\\ &+(1-\Gamma)\frac{e^{-2\mu y}}{e^{-\mu(\alpha+c)}\left(\cosh(\sqrt{\mu^2+2q}(\alpha-c))+\frac{\mu}{\sqrt{\mu^2+2q}}\sinh(\sqrt{\mu^2+2q}(\alpha-c))\right)}\bigg], \quad \mu>0
	\end{aligned}
	\end{equation}
\end{normalsize}
and
\begin{normalsize}
	\begin{equation}\label{eq:objective_negative}
	\begin{aligned}
	v(y; t):= \max_{\alpha\in [c, y]}&\bigg[\Gamma\bigintsss_c^{\alpha}\frac{e^{-2\mu z}-e^{-2\mu\alpha}}{e^{-2\mu c}-e^{-2\mu\alpha}}\frac{e^{-\mu(y-z)-\frac{\mu^2t}{2}}}{\sqrt{2\pi t}}\left(e^{-\frac{(y-z)^2}{2t}}-e^{-\frac{(y+z-2c)^2}{2t}}\right)\diff z+\int_0^t\frac{|c-y|}{\sqrt{2\pi u^3}}e^{-\frac{(c-y-\mu u)^2}{2u}}\diff u\\ &+(1-\Gamma)\frac{e^{-\mu(\alpha-c)}}{\cosh(\sqrt{\mu^2+2q}(\alpha-c))-\frac{\mu}{\sqrt{\mu^2+2q}}\sinh(\sqrt{\mu^2+2q}(\alpha-c))}\bigg], \qquad \mu<0,
	\end{aligned}
	\end{equation}
\end{normalsize}respectively. We used $p(t;\cdot, \cdot)$ in \eqref{eq:original-transition-density} and the hitting time density (\citet[Chapter 3, Sec. 3.5.C]{karatzas}). For the discount rate $q$, one could use the expected return on the company's asset or the weighted average of cost of capital (WACC). In our examples in Section \ref{sec:application}, we defined $q$ as WACC (refer to Appendix \ref{sec:q} for the calculation method).

\section{Application to the Leverage Process}\label{sec:application}

\subsection{Last Passage Time}\label{subsec:last-passage-application}
Below, we illustrate how the last passage time can be useful for risk management. For the analysis, we choose American Apparel Inc., which filed for bankruptcy protection in October 2015\footnote{\url{https://www.theguardian.com/business/2015/oct/05/american-apparel-files-for-bankruptcy} Accessed on 2017/05/27}. The method to choose an appropriate alarming threshold is discussed in Section \ref{sec:endogenizing}. Here we use two alarming thresholds for the analysis: $R^*=\frac{1}{0.8}=1.25$ and $R^*=\frac{1}{0.6}\approx1.67$. These are the levels of the leverage ratio when debt makes up $80\%$ and $60\%$ of assets, respectively.

As discussed in Section \ref{subsec:2-2}, the study  of the leverage process can be reduced to the study of a Brownian motion with drift; therefore, the last passage time of the leverage ratio to $R^*$ is equivalent to the last passage time of the Brownian motion with drift $\mu$ to an appropriate $\alpha$. While $\lambda_{\alpha}$ is not a stopping time, the mathematical foundation provided in our paper enables us to calculate certain probabilities associated with $\lambda_{\alpha}$. Note that all of these probabilities are calculated based on the information of the current position of the leverage process and the information available up to current time. As we will show below, the probabilities associated with $\lambda_{\alpha}$ provide very useful information for risk management.
\subsubsection{Methodology}\label{subsubsec:method}
In order to analyze the distribution of the last passage time to $R^*$, we estimate the necessary  parameters $\nu$ and $\sigma$ of the leverage process by the method in \citet{duan1994}, \citet{duan2000}, and \citet{lehar2005}. This is the estimation method used for \emph{structural} credit risk models and it assumes that the company equity is a European call option written on company assets with a strike price equal to a certain level of debt (see Section \ref{subsec:2-2}). Let us take an example of December 2013 and demonstrate the estimation procedure that we use for each month in the first column of Table \ref{tbl:american-apparel}.
\begin{enumerate}[leftmargin=*]
	\item At the end of December 2013, the estimated drift and volatility parameters ($\nu$ and $\sigma$) of the company's asset process $A$ are $-0.5080$ and $0.2974$, respectively. These parameters were calculated by using the equity and debt data of the previous 6 months. We set debt level $D$ as a sum of ``Revolving credit facilities and current'', ``Cash overdraft", ``Current portion of long-term debt", ``Subord. notes payable - Related Party", and one half of ``Total Long Term Debt", all taken from the company's balance sheet (December 2013). We use daily data for our estimation. Using the quarterly balance sheets, we obtain  daily debt values by interpolation.
	We use 1-year treasury yield at the end of December 2013 as the risk-free rate $r$. We calculate the drift in $\eqref{eq:dynamics-of-X}$ as $\mu=\frac{\nu-r}{\sigma}$.
	\item  We compute the initial asset value $A_0$ (i.e., the value at the end of December 2013) from the Black-Scholes formula (see \citet{lehar2005}) by using $\sigma(=0.2974)$, the equity value $E_0$ and debt level $D_0$ at the end of December 2013:
	\[
	E_0=A_0 \Phi(d_0)-D_0 \Phi(d_0-\sigma\sqrt{T}), \quad \text{where}\quad d_0=\frac{\ln(A_0/D_0)+\frac{\sigma^2 T}{2}}{\sigma\sqrt{T}}.
	\] We set  $T=1$ following \citet{lehar2005}. $\Phi$ is the standard normal distribution function.
	\item We set $B_0^A=0$ and compute $R_0$ as the ratio of $A_0$ and $D_0$ to obtain $R_0=1.8596$. We compute $y$, $\alpha$, and $c$ from $\frac{1}{\sigma}\ln\left(\frac{R D_0}{A_0}\right)$ by setting $R=R_0$, $R=R^*$, and $R=1$, respectively. Note that $y$ is equal to $0$.
\end{enumerate}
\bigskip

For each reference month in Table \ref{tbl:american-apparel}, we can calculate the necessary parameters using the method described above. Note that $R_0$ in Table \ref{tbl:american-apparel} is the value of $R$ at the end of each estimation period, which is equal to  the starting value in an empirical analysis regarding $\lambda_{\alpha}$. We are interested in the following 4 topics:
\begin{itemize}[leftmargin=*]
	\item Starting at $y$, what is the probability that the last passage time will occur within 1 year when the insolvency occurs in finite time? This is expressed by the quantity $\int_0^1\p_y(\lambda_{\alpha}\in \diff t,T_c<\infty)$. The event $T_c<\infty$ makes a difference only in the case of $\mu>0$, since $T_c<\infty$ a.s. for $\mu<0$. Since the parameters change depending on the reference month (Table \ref{tbl:american-apparel}), this quantity changes according to the starting reference point as well. Naturally, higher probability indicates greater credit risk. Note that when $\alpha>y$, the process may never reach
	$\alpha$. For such situations, we have calculated $P_y(\lambda_{\alpha}=0)$ as well.
	\item At any point in time, by varying $R^*$ and studying the last passage time to each level, one can obtain detailed information about credit risk. We set one reference month as a starting point and by varying $R^*$ from $1.2$ to $2.5$, we calculate $\p_y(\lambda_{\alpha}\in[0,1])$ for each corresponding $\alpha$. We emphasize that by varying $R^*$, only the corresponding $\alpha$'s change.
	\item We analyze the relationship between  $\lambda_{\alpha}$ and the default probability (DP). At each reference point, we calculate the probability of defaulting at the end of next year in \citet{merton1974}'s model.  For this, we simulate the asset path for 1 year using the estimated parameters and compare the debt ($D_0e^{r}$) and asset values at the end of 1 year. If the asset value is smaller than debt, we consider it as default. We also calculate $\p_y(T_c<1)$ which is naturally higher than DP, since the latter one only considers the final asset and debt values.
	\item Finally, we calculate the quantity $\p_y(Q_t=\lambda_{\alpha},X_t\in(c,\alpha))$ with $Q_t=\sup\{s<t:X_s=\alpha\}$. This quantity indicates that at a fixed time $t$, if the leverage ratio is below $R^*$, what is the probability that it will never reach $R^*$ again. We use this quantity for our optimization problem in Section \ref{sec:endogenizing} as well. See \eqref{eq:first_quantity} and Remark \ref{rmrk:projection}. For our analysis, we set $t=\frac{1}{4}$ and $t=\frac{1}{2}$, meaning that we are interested in the position after 3 and 6 months. We reproduce \eqref{eq:first_quantity} here
	\[\p_y(Q_t=\lambda_{\alpha},X_t\in(c,\alpha))=\int_{c}^{\alpha}\frac{s(\alpha)-s(z)}{s(\alpha)-s(c)}\p_y(X_t\in\diff z).\]
	Note that this probability may be small due to the fact that $\p_y(X_t\in(c,\alpha))$ is small. This in turn may be the result of $\p_y(T_c<t)$ being high. Therefore, we have calculated all three quantities together for the sake of comparison.
\end{itemize}
\begin{table}[]
	\caption{American Apparel Inc. Parameters (up to 4 decimal points). Standard errors are displayed in parentheses. $A_0$, $D_0$, $R_0$ are the values at the end of each indicated month. 1-year treasury yield curve rate observed at the end of each indicated month was used as $r$. Source: U.S. Department of the Treasury.}
	\label{tbl:american-apparel}
	\resizebox{0.92\textwidth}{!}{\begin{minipage}{\textwidth}
			\begin{tabular}{cccccccccc}
				& $\nu$   & $\sigma$ & $\nu$ s.e. & $\sigma$ s.e. & $\mu$ &  $A_0$  & $D_0$  & $R_0$  & r \\
				\hline\\
				12-Jun & 0.5249  & 0.3940    & (0.1216)     & (0.0323)        & 1.3268  & 214149555 & 125950000 & 1.7003 & 0.0021 \\
				12-Sep & 1.0347  & 0.3127   & (0.2348)     & (0.0435)        & 3.3030 & 290116712 & 126700000 & 2.2898 & 0.0017 \\
				12-Dec & 0.1144  & 0.3720    & (0.0461)     & (0.0158)        & 0.3033 & 223568448 & 117050000 & 1.9100   & 0.0016 \\
				13-Mar & 0.6710   & 0.5376   & (0.1474)     & (0.0488)        & 1.2456  & 362153886 & 129900000 & 2.7879 & 0.0014\\
				13-Jun & 1.3344  & 0.5069   & (0.5055)     & (0.0776)        & 2.6296 & 349193110 & 144100000 & 2.4233 & 0.0015\\
				13-Sep & -0.6840  & 0.3325   & (0.1383)     & (0.0200)        & -2.0604  & 284917506 & 141900000 & 2.0079 & 0.0010 \\
				13-Dec & -0.5080  & 0.2974   & (0.1714)     & (0.0477)        & -1.7128 &  292977497 & 157550000 & 1.8596 & 0.0013 \\
				14-Mar & -0.8271 & 0.7350    & (0.1153)     & (0.0225)        & -1.1270 & 203302448 & 139750000 & 1.4548 & 0.0013 \\
				14-Jun & 0.1555  & 1.0470    & (0.0427)     & (0.0326)        & 0.1475  & 265400126 & 140600000 & 1.8876 & 0.0011 \\
				14-Sep & 0.7723  & 0.6772   & (0.0693)     & (0.0124)        & 1.1384  & 272059468 & 140350000 & 1.9384 & 0.0013 \\
				14-Dec & 0.3089  & 0.6440    & (0.0753)     & (0.1100)        & 0.4757  & 321884369 & 149700000 & 2.1502 & 0.0025 \\
				15-Mar & -0.0881 & 0.5476   & (0.0466)     & (0.0117)        & -0.1657 & 269423743 & 154700000 & 1.7416 & 0.0026 \\
				15-Jun & -0.8783 & 0.3197   & (0.0839)     & (0.0733)        & -2.7562  & 243535792 & 157850000 & 1.5428 & 0.0028
			\end{tabular}
			%\hfill\
	\end{minipage}}
\end{table}

\subsubsection{Evaluations}\label{subsubsec:evaluation}
We discuss the bullet points raised in Section \ref{subsubsec:method} and demonstrate that we can extract more detailed information of credit conditions than when we only know $\p_y(T_c<t)$.  Figure \ref{fig:american-apparel} displays the probability $\int_0^1\p_y(\lambda_{\alpha}\in \diff t,T_c<\infty)$ for two thresholds $R^*=1.25$ and $R^*=1.67$. This probability was calculated by applying Corollary \ref{coro:1} to the process in \eqref{eq:dynamics-of-X}. We note that there is a sharp rise in the graph in 2013, which is consistent with the fact that American Apparel had problems with a new distribution center in 2013. \footnote{For more details about the company's financial situation before going bankrupt, see \url{http://www.wsj.com/articles/american-apparel-ceo-made-crisis-a-pattern-1403742953} Accessed on 2018/11/26} Although the company recovered to some extent during the period of June--December 2014, it went bankrupt in October 2015. The calculation results are summarized in Table \ref{tbl:american-apparel-2} and we discuss them here.

\begin{enumerate}[leftmargin=*]
	\item The two quantities $\int_0^1\p_y(\lambda_{\alpha}\in \diff t,T_c<\infty)$ and $\p_y(\lambda_{\alpha}=0)$ provide additional information to $\p_y(T_c<1)$ and default probability (DP) regarding the creditworthiness of the company. For example, $\p_y(T_c<1)$ and DP are both high in Sep-13. Even though they decrease by more than $10\%$ in Dec-13, $\int_0^1\p_y(\lambda_{\alpha}\in \diff t,T_c<\infty)$ still remains above $80\%$ for $R^*=1.67$. For the leverage ratio, this means that the probability of passing the level $1.67$ last time within 1 year is more than $80\%$ and indicates high credit risk.
	\item Moving to Mar-14, we see that $\p_y(\lambda_{\alpha}=0)$ has increased for $R^*=1.67$. Note that $R_0=1.4548<1.67$ for this reference month (see Table \ref{tbl:american-apparel}); therefore, there is around $44\%$ probability (see Table \ref{tbl:american-apparel-2}) that the firm will become insolvent before the leverage ratio recovers to $1.67$.   We also have a $90\%$ probability of passing the level $1.25$ for the last time within 1 year. The drift $\mu$ turns positive for the next 3 reference points but later becomes negative again.
	\item $\p_y(T_c<1)$ and DP greatly increase from Dec-14 to Mar-15 but the increase in $\int_0^1\p_y(\lambda_{\alpha}\in \diff t,T_c<\infty)$ is even greater for $R^*=1.67$ and signals increased risk. Finally, $R_0=1.5428<1.67$ in Jun-15 and note that $\int_0^1\p_y(\lambda_{\alpha}\in \diff t,T_c<\infty)$ became relatively small (see the red line in Figure \ref{fig:american-apparel}).  However, this is because the probability of never recovering to $1.67$ becomes $74\%$.  This, together with all the other values, indicates an extremely high risk of insolvency. Note also that $\int_0^1\p_y(\lambda_{\alpha}\in \diff t,T_c<\infty)$ is even higher in Jun-15 for the level $R^*=1.25$ (black dotted line), a contrast to $R^*=1.67$.  It follows that by considering several levels of $R^*$, we can see the company's credit conditions more closely. Next, let us further investigate this point.
\end{enumerate}

\begin{figure}[h]
	\includegraphics[scale=0.8]{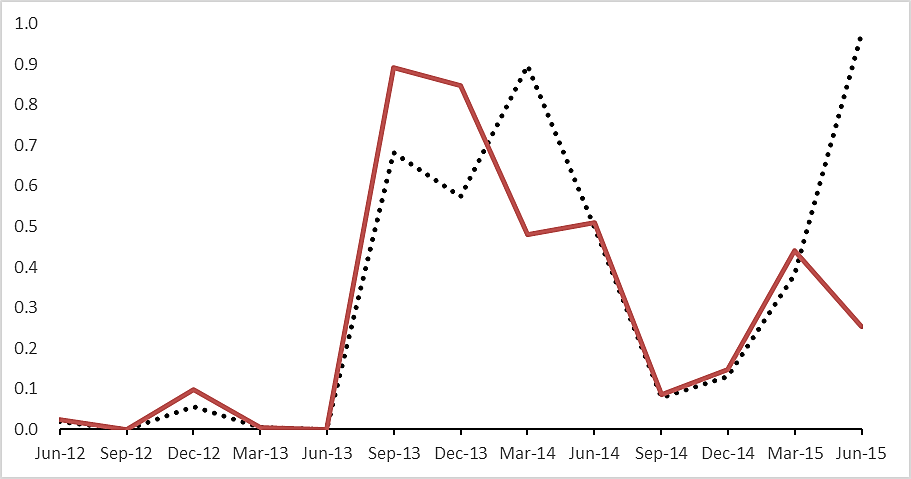}
	\caption{Last Passage Time to $R^*=1.25$ (black dotted line) and $R^*=1.67$ (red line) for American Apparel Inc. For suitable $\alpha$, the graph displays $\int_0^1\p_y(\lambda_{\alpha}\in \diff t, T_c<\infty)$ for Brownian motion with drift. In our setting, $y=0$. The estimated values are available in Table \ref{tbl:american-apparel-2}.}
	\label{fig:american-apparel}
\end{figure}

\begin{table}[h]
	\caption{Calculation results (up to 4 decimal points) based on the parameters in Table \ref{tbl:american-apparel}. In our setting, $y=0$. Note that by changing the level of $R^*$, only corresponding $\alpha$'s change. For each starting point, DP denotes the probability of defaulting at the end of next year.}
	\label{tbl:american-apparel-2}
	\resizebox{0.7\textwidth}{!}{\begin{minipage}{\textwidth}
			\hspace{-3cm}\begin{tabular}{cccccccccccccc}
				&     &    &  & \multicolumn{3}{c}{$R^ *=1.25$}                                               &  & \multicolumn{3}{c}{$R^*=1.67$}                                                                &  &               &        \\
				\cline{5-7} \cline{9-11}
				&$\mu$ & $c$       &  & $\alpha$      & $\int_0^1\p_0(\lambda_{\alpha}\in \diff t,T_c<\infty)$ & $\p_y(\lambda_{\alpha}=0)$ &  & $\alpha$  & $\int_0^1\p_0(\lambda_{\alpha}\in \diff t,T_c<\infty)$ & $\p_y(\lambda_{\alpha}=0)$ &  & $\p_y(T_c<1)$ & DP     \\
				\cline{5-7} \cline{9-11}
				Jun-12 & 1.3268 &-1.3471 &  & -0.7808 & 0.0190                                                 & 0.8740                     &  & -0.0456 & 0.0232                                                 & 0.1140                     &  & 0.0175        & 0.0038 \\
				Sep-12 & 3.3030 &-2.6490 &  & -1.9355 & 0.0000                                                 & 1.0000                     &  & -1.0092 & 0.0000                                                 & 0.9987                     &  & 0.0000        & 0.0000 \\
				Dec-12 & 0.3033 &-1.7397 &  & -1.1398 & 0.0571                                                 & 0.4992                     &  & -0.3610 & 0.0992                                                 & 0.1967                     &  & 0.0468        & 0.0208 \\
				Mar-13 & 1.2456 &-1.9072 &  & -1.4922 & 0.0033                                                 & 0.9757                     &  & -0.9533 & 0.0044                                                 & 0.9070                     &  & 0.0030        & 0.0008 \\
				Jun-13 & 2.6296 &-1.7462 &  & -1.3059 & 0.0001                                                 & 0.9990                     &  & -0.7345 & 0.0001                                                 & 0.9790                     &  & 0.0001        & 0.0000 \\
				Sep-13 & -2.0604 &-2.0966 &  & -1.4255 & 0.6817                                                 & 0                     &  & -0.5542 & 0.8916                                                 & 0                     &  & 0.5767        & 0.4868 \\
				Dec-13 & -1.7128 &-2.0862 &  & -1.3358 & 0.5725                                                 & 0                    &  & -0.3616 & 0.8483                                                 & 0                     &  & 0.4467        & 0.3558 \\
				Mar-14 & -1.1270 &-0.5100 &  & -0.2064 & 0.8968                                                 & 0                     &  & 0.1877  & 0.4803                                                 & 0.4353                     &  & 0.8918        & 0.7293 \\
				Jun-14 & 0.1475 &-0.6068 &  & -0.3937 & 0.4981                                                 & 0.1096                     &  & -0.1170 & 0.5095                                                 & 0.0339                     &  & 0.4954        & 0.2248 \\
				Sep-14 & 1.1384 &-0.9773 &  & -0.6478 & 0.0796                                                 & 0.7712                     &  & -0.2201 & 0.0850                                                 & 0.3941                     &  & 0.0781        & 0.0176 \\
				Dec-14 & 0.4757 &-1.1887 &  & -0.8422 & 0.1295                                                 & 0.5513                     &  & -0.3924 & 0.1483                                                 & 0.3116                     &  & 0.1248        & 0.0484 \\
				Mar-15 & -0.1657 &-1.0131 &  & -0.6056 & 0.3805                                                 & 0                     &  & -0.0766 & 0.4408                                                 & 0                     &  & 0.3652        & 0.1982 \\
				Jun-15 & -2.7562 &-1.3564 &  & -0.6584 & 0.9760                                                 & 0                     &  & 0.2478  & 0.2523                                                 & 0.7449                     &  & 0.9538        & 0.9194
			\end{tabular}
	\end{minipage}}
\end{table}

\indent Let us emphasize that for each fixed point in time, by varying the level $\alpha$, one can obtain more detailed information about credit conditions.
Reviewing these numbers, the management may finetune the company's strategy, investment, and operations. The last passage time not only provides additional information to default probability but also is important in its own right. We illustrate this by looking into Table \ref{tbl:american-apparel-alpha} and Figure \ref{fig:last-passage-time-alpha}.
\newline\indent Suppose that we are at the end of December 2013. The initial position of the leverage ratio is $R_0=1.8596$ for this point in time. We set various levels of $R^*$ (and hence $\alpha$) and compute $\int_0^1\p_y(\lambda_{\alpha}\in\diff t,\lambda_{\alpha}>0)$, the probability that the premonition (i.e., the last passage to $R^*$) occurs within 1 year and $\p_y(\lambda_\alpha=0)$, the probability that the leverage ratio never hits $R^*$. Note that for this period, the drift of Brownian motion is $\mu=-1.7128$, and therefore, $T_c<\infty$ a.s..
\begin{itemize}[leftmargin=*]
	\item [(4)] We see that there is more than $75\%$ probability that the leverage ratio will never recover to $2.1$ or higher levels. Even though the current level is $R_0=1.8596$, this indicates that the levels above $2.1$ are too high in credit quality relative to the company's current position.
	\item [(5)] Next, take $R^*=1.9$. Then, $\p_y(\lambda_\alpha=0)$ is down to $0.2195$, while the probability that the last passage to $R^*=1.9$ occurs within 1 year is $0.7045$. These values together indicate that the probability of never recovering even to the level $1.9$ after 1 year is more than $92\%$. This is valuable information because the decrease in default probability from Sep-13 to Dec-13 and increase in $\mu$ (Table \ref{tbl:american-apparel-2}) may be interpreted as credit quality improvement, while the information in Table \ref{tbl:american-apparel-alpha} clearly indicates the presence of high risk and that the prognosis is not good.
	\item [(6)] If we go further down to $R^*=1.5$, we have more than $75\%$ chance of passing the levels $1.5$ and above for the last time within 1 year. Even for $R^*=1.2$, we have  $\int_0^1\p_y(\lambda_{\alpha}\in\diff t,\lambda_{\alpha}>0)=0.5347$. This means that there is about 50\% chance that the company passes $R^*=1.2$ within 1 year and will not return to this level based on the current position of $R_0=1.8596$.
\end{itemize}
The quantities associated with the last passage time are more stable than default probabilities. See points (1) and (5) above. These quantities should prevent the management from becoming too optimistic when the risk still persists.
\begin{table}[]
	\caption{American Apparel Inc. last passage time probabilities (up to 4 decimal points) for varying $R^*$ by using the end of December 2013 as the starting point (see Table \ref{tbl:american-apparel}). We note that the change of $R^*$ affects only $\alpha$ and all other parameters are unchanged. The dotted line between $R^*=1.9$ and $R^*=1.8$ indicates that $R_0$ in December 2013 is located between the two numbers. Note that $y=0$ in our setting.}
	\label{tbl:american-apparel-alpha}
	\resizebox{0.8\textwidth}{!}{\begin{minipage}{\textwidth}
			\hspace{1cm}\begin{tabular}{cccccccccccc}
				$R^*$   & $\alpha$ & $\int_0^1\p_0(\lambda_{\alpha}\in\diff t,\lambda_{\alpha}>0)$
				& $\p_0(\lambda_{\alpha}=0)$
				& $\p_0(\lambda_{\alpha}\in[0,1])$
				&  &  &  &  &  &  &  \\
				\hline\\
				2.5 & 0.9952   & 0.0219 & 0.9670 & 0.9888 &  &  &  &  &  &  &  \\
				2.4 & 0.8579   & 0.0375 & 0.9471 & 0.9846 &  &  &  &  &  &  &  \\
				2.3 & 0.7148   & 0.0651 & 0.9137 & 0.9787 &  &  &  &  &  &  &  \\
				2.2 & 0.5653   & 0.1147 & 0.8559 & 0.9706 &  &  &  &  &  &  &  \\
				2.1 & 0.4089   & 0.2058 & 0.7537 & 0.9596 &  &  &  &  &  &  &  \\
				2   & 0.2448   & 0.3766 & 0.5679 & 0.9445 &  &  &  &  &  &  &  \\
				1.9 & 0.0723   & 0.7045 & 0.2195 & 0.9241 &  &  &  &  &  &  &  \\
				\dotfill & \dotfill &\dotfill &\dotfill &\dotfill &\dotfill &\\
				1.8 & -0.1095  & 0.8968 & 0      & 0.8968 &  &  &  &  &  &  &  \\
				1.7 & -0.3017  & 0.8610 & 0      & 0.8610 &  &  &  &  &  &  &  \\
				1.6 & -0.5056  & 0.8149 & 0      & 0.8149 &  &  &  &  &  &  &  \\
				1.5 & -0.7227  & 0.7574 & 0      & 0.7574 &  &  &  &  &  &  &  \\
				1.4 & -0.9547  & 0.6887 & 0      & 0.6887 &  &  &  &  &  &  &  \\
				1.3 & -1.2039  & 0.6118 & 0      & 0.6118 &  &  &  &  &  &  &  \\
				1.2 & -1.4731  & 0.5347 & 0      & 0.5347 &  &  &  &  &  &  &
			\end{tabular}
	\end{minipage}}
\end{table}
\begin{figure}[h]
	\includegraphics[scale=0.45]{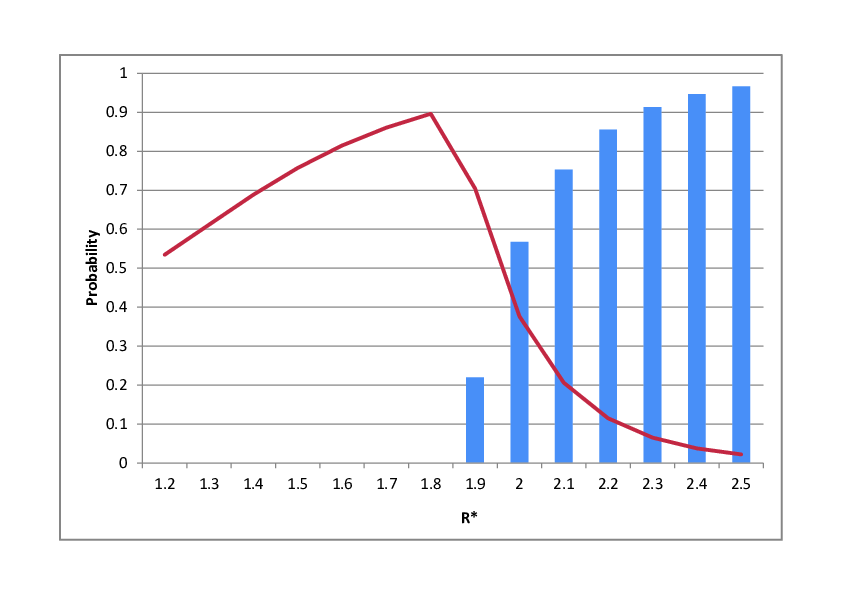}
	\caption{Last passage time to varying $R^*$ for American Apparel Inc. by using the end of December 2013 as the starting point (see Tables \ref{tbl:american-apparel} and \ref{tbl:american-apparel-alpha}). $R_0=1.8596$. The horizontal axis displays $R^*$ and the vertical axis displays $\int_0^1\p_y(\lambda_{\alpha}\in \diff t, \lambda_{\alpha}>0)$ (red lines) and $\p_y(\lambda_\alpha=0)$ (blue bars) for Brownian motion with drift for suitable $\alpha$. Note that $y=0$ in our setting.}\label{last-passage-time-alpha}
	\label{fig:last-passage-time-alpha}
\end{figure}
\newline \indent Finally, we calculate  $\p_y(Q_t=\lambda_{\alpha},X_t\in(c,\alpha))$ with $Q_t=\sup\{s<t:X_s=\alpha\}$. This quantity indicates that at a fixed time $t$, if the leverage ratio is below $R^*$, what is the probability that it will never reach $R^*$ again.  For our analysis, we set $t=\frac{1}{4}$ and $t=\frac{1}{2}$. The results are displayed in Table \ref{tbl:american-apparel-3} where we have chosen $\alpha$ corresponding to $R^*=1.67$.
\begin{itemize}[leftmargin=*]
	\item [(7)] At the end of September 2013, the probability of insolvency happening in 3 and 6 months are 0.13\% and 9.34\%, respectively. However, the probability that the leverage ratio never recovers to the level of 1.67 if it is below 1.67 after 3 (6) months is 30.88\% (53.44\%) and signals high risk.
	\item [(8)] The probability of becoming insolvent $\p_y(T_c<t)$ decreases in December 2013 for both $t=\frac{1}{4}$ and $t=\frac{1}{2}$. To the contrary, we see an increase in $\p_y(Q_t=\lambda_{\alpha},X_t\in(c,\alpha))$ which in turn indicates the still remaining high credit risk and provides valuable information to the management.
	\item [(9)] At the end of March 2014, $\p_y(T_c<t)$ becomes more than 50\% (70\%) for $t=\frac{1}{4}$ ($t=\frac{1}{2}$); therefore, $\p_y(X_t\in(c,\alpha))$ becomes smaller than the numbers in the previous data point Dec-13, which in turn decreases $\p_y(Q_t=\lambda_{\alpha},X_t\in(c,\alpha))$. Note that $R_0=1.4548<1.67$ for this reference point and the probability of never recovering to 1.67 is 43.53\%,  a  hike from the previous data point (see Table \ref{tbl:american-apparel-2}).
	\item [(10)]  The situation starts to worsen after December 2014. In June 2015, $R_0=1.5428<1.67$ and there is roughly 13\% probability of becoming insolvent within next 3 months (see the last row in Table \ref{tbl:american-apparel-3} with $t=1/4$). In addition to this, we find that there is 84\% chance that the leverage ratio will be below 1.67 after 3 months and when it is the case, it will never recover back to 1.67 with the probability of 80\%. For $t=\frac{1}{2}$,  $\p_y(T_c<t)$ is greater than 60\%; therefore, $\p_y(X_t\in(c,\alpha))$ and $\p_y(Q_t=\lambda_{\alpha},X_t\in(c,\alpha))$ are less than 40\%. Moreover, we know that there is more than 74\% chance that the leverage ratio never recovers to the level of 1.67 (see Table \ref{tbl:american-apparel-2}).
\end{itemize}

\begin{table}[h]
	\centering
	\caption{Summary of the results for $R^*=1.67$. In our setting, $y=0$.}
	\label{tbl:american-apparel-3}
	\resizebox{0.7\textwidth}{!}{\begin{minipage}{\textwidth}
			\hspace{-3cm}\begin{tabular}{ccccccccccccc}
				&          &         &         & \multicolumn{3}{c}{$t=\frac{1}{4}$}                                                       &  & \multicolumn{3}{c}{$t=\frac{1}{2}$}                                                     &  &  \\
				\cline{5-7}\cline{9-11}
				& $\alpha$ & $c$     & $\mu$   & $\p_y(Q_t=\lambda_{\alpha},X_t\in(c,\alpha))$ & $\p_y(X_t\in (c,\alpha))$ & $\p_y(T_c<t)$ &  & $\p_y(Q_t=\lambda_{\alpha},X_t\in(c,\alpha))$ & $\p_y(X_t\in(c,\alpha))$ & $\p_y(T_c<t)$ &  &  \\
				\cline{5-7}\cline{9-11}
				Jun-12 & -0.0456  & -1.3471 & 1.3268  & 0.0124                                        & 0.2243                    & 0.0010        &  & 0.0113                                        & 0.1512                   & 0.0069       &  &  \\
				Sep-12 & -1.0092  & -2.6490 & 3.3030  & 0.0000                                        & 0.0001                    & 0.0000        &  & 0.0000                                        & 0.0001                   & 0.0000       &  &  \\
				Dec-12 & -0.3610  & -1.7397 & 0.3033  & 0.0281                                        & 0.1908                    & 0.0003        &  & 0.0485                                        & 0.2262                   & 0.0080       &  &  \\
				Mar-13 & -0.9533  & -1.9072 & 1.2456  & 0.0004                                        & 0.0057                    & 0.0000        &  & 0.0013                                        & 0.0124                   & 0.0005       &  &  \\
				Jun-13 & -0.7345  & -1.7462 & 2.6296  & 0.0000                                        & 0.0027                    & 0.0000        &  & 0.0000                                        & 0.0018                   & 0.0000       &  &  \\
				Sep-13 & -0.5542  & -2.0966 & -2.0604 & 0.3088                                        & 0.4676                    & 0.0013        &  & 0.5344                                        & 0.6562                   & 0.0934       &  &  \\
				Dec-13 & -0.3616  & -2.0862 & -1.7128 & 0.3561                                        & 0.5522                    & 0.0008        &  & 0.5519                                        & 0.6969                   & 0.0611       &  &  \\
				Mar-14 & 0.1877   & -0.5100 & -1.1270 & 0.1853                                        & 0.3278                    & 0.5029        &  & 0.0774                                        & 0.1416                   & 0.7338       &  &  \\
				Jun-14 & -0.1170  & -0.6068 & 0.1475  & 0.0628                                        & 0.1881                    & 0.2053        &  & 0.0326                                        & 0.0991                   & 0.3564       &  &  \\
				Sep-14 & -0.2201  & -0.9773 & 1.1384  & 0.0238                                        & 0.1418                    & 0.0148        &  & 0.0166                                        & 0.0927                   & 0.0448       &  &  \\
				Dec-14 & -0.3924  & -1.1887 & 0.4757  & 0.0326                                        & 0.1436                    & 0.0097        &  & 0.0357                                        & 0.1379                   & 0.0507       &  &  \\
				Mar-15 & -0.0766  & -1.0131 & -0.1657 & 0.1550                                        & 0.4215                    & 0.0504        &  & 0.1283                                        & 0.3274                   & 0.1789       &  &  \\
				Jun-15 & 0.2478   & -1.3564 & -2.7562 & 0.7925                                        & 0.8406                    & 0.1289        &  & 0.3659                                        & 0.3798                   & 0.6095       &  &
			\end{tabular}
	\end{minipage}}
\end{table}
We should keep in mind that the occurrence of the last passage to $R^*$ within a certain period of time does \emph{not} mean that insolvency occurs within that period. To obtain more information in this respect,  we consider,  in the next section, the time interval between the last passage time to a state and the subsequent insolvency.

\subsection{Time Left Until Insolvency After the Last Passage Time}\label{subsec:4-special}
We use Proposition \ref{prop:additional} in order to analyze the density $\p_\cdot(T_c-\lambda_\alpha\in \diff t)$. Figure \ref{Fig-problem2} displays the density $\p_\cdot(T_c-\lambda_\alpha\in \diff t)$ for the reference point December 2013 (see Table \ref{tbl:american-apparel}) when $\alpha=-1.3358$, $c=-2.0862$, and $\mu=\frac{\nu-r}{\sigma}=-1.7128$ (taken from Table \ref{tbl:american-apparel-2}). This is the density of the time left until insolvency after the last passage time to $R^*=1.25$. We used Zakian's method described in \citet{halsted-brown1972} to obtain the density. We want to emphasize that when taking the limit $x\to c$, the density in \eqref{eq:answer-to-problem2} converges. It is seen in Figure \ref{Fig-problem2} that the distribution is dense in the range of $t=0.1\sim 0.2$, so that insolvency is rather imminent after passing $\alpha=-1.3358$, since the company's asset value has the negative drift parameter $\mu=-1.7128$.  From the numerical result in Section \ref{subsubsec:evaluation},  for $R^*=1.25$, the probability $\int_0^1\p_y(\lambda_\alpha\in \diff t, \lambda_\alpha>0)$ is 0.5725 (see Table \ref{tbl:american-apparel-2}).  Based on the analysis here,  if the last passage time occurs, the time left for the management to improve credit quality is only a month or so.
\begin{figure}[h]
	\begin{center}
		\centering{\includegraphics[scale=0.55]{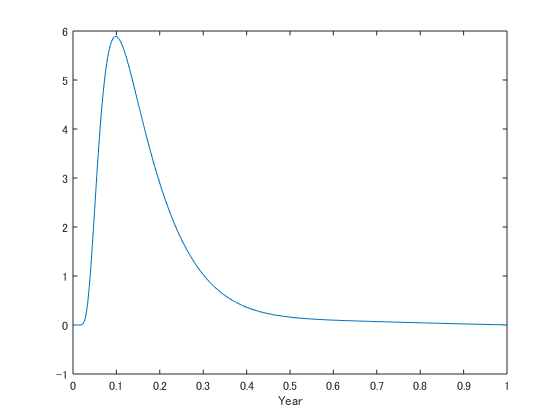}}\\
		\caption{The density of $\p_\cdot(T_c-\lambda_\alpha\in \diff t)$ of American Apparel Inc., where we take $\alpha=-1.3358$ (corresponding to $R^*=1.25$), $c=-2.0862$, and $\mu=\frac{\nu-r}{\sigma}=-1.7128$ of the company's  December 2013 figures from Tables \ref{tbl:american-apparel} and \ref{tbl:american-apparel-2}.} \label{Fig-problem2}
	\end{center}
\end{figure}

\subsection{Endogenizing the Threshold}
In this section, we solve the optimization problem from Section \ref{sec:endogenizing} to endogeneously obtain the threshold $\alpha$ of interest for the leverage ratio. We analyze the results and their implications.
\subsubsection{The effect of $\Gamma$ on the optimal $\alpha$}\label{subsubsec:gamma-effect}
Figure \ref{fig:alpha_gamma} displays the optimal values of $\alpha$ (we call it $\alpha^*$) for each $\Gamma$ by using the end of December 2013 as a reference point. As it is expected, when the second element of the objective function has the priority (i.e., small $\Gamma$), the optimal $\alpha$ is low. This renders $A_{\infty}$ into a small value, making the Laplace transform greater. For $\Gamma\le0.3$, we have a corner solution and $\alpha^*=c$. To the contrary, when the first element has the priority (i.e., large $\Gamma$), optimal $\alpha$ increases, giving a sense of danger as a precaution of potential threat even at high levels of $\alpha$. We remind the reader that $y=0$ in our setting; therefore, for large $\Gamma$ we have a corner solution and $\alpha^*=y$. As a closer look in Figure \ref{fig:alpha_gamma_0.0005_inner} reveals, our optimization problem has an inner solution for $\Gamma\in(0.3,0.5)$.
\begin{figure}[]
	\begin{subfigure}{0.5\textwidth}
		\includegraphics[scale=0.5]{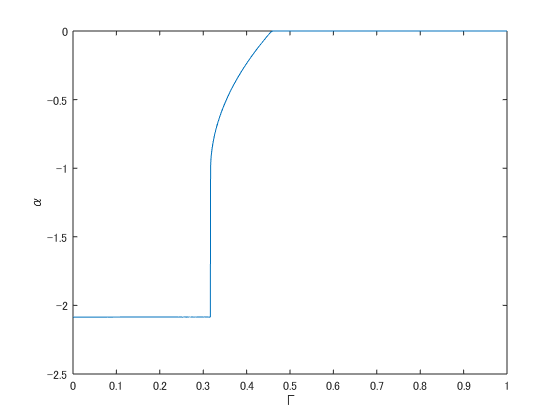}
		\caption{}
		\label{fig:alpha_gamma_0.0005}
	\end{subfigure}\\
	\begin{subfigure}{0.5\textwidth}
		\includegraphics[scale=0.5]{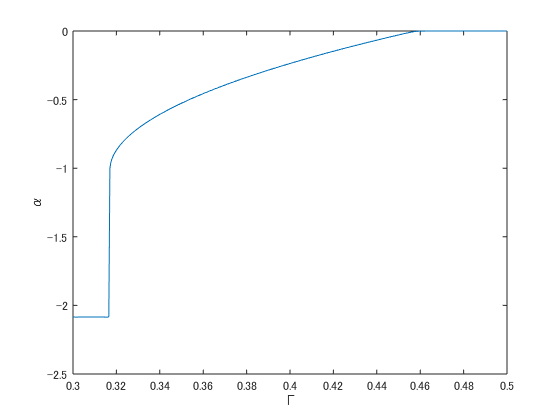}
		\caption{}
		\label{fig:alpha_gamma_0.0005_inner}
	\end{subfigure}
	\caption{Optimal values of $\alpha$ for $\Gamma\in[0,1]$ (a) and a closer look for $\Gamma\in[0.3,5]$ (b). The end of December 2013 (see Table \ref{tbl:american-apparel-2}) is chosen as a starting point. $y=0$, $c=-2.0862$, $\mu=-1.7128$, $q=0.3006$, $t=1$.}
	\label{fig:alpha_gamma}
\end{figure}
\subsubsection{Comparative statics}
For comparative statics, we set $\Gamma=0.4$. Table \ref{tbl:comparative_alpha} in Appendix \ref{sec:q} displays the optimal $\alpha$ (that is, $\alpha^*$) for \eqref{eq:objective_positive} and \eqref{eq:objective_negative} with different values of $\mu$ and $q$. Recall that  we defined $q$ as WACC (see Appendix \ref{sec:q} for the calculation method). The pattern of $\alpha^*$ does not depend on the initial value of $\alpha$ used during optimization; however, some values in the table may slightly change with different initial values. The combination of high $\mu$ and low $q$ indicates less risky situations. The lower-right corner of the table shows the riskiest conditions. Let us take a closer look at the entries where the optimization has inner solutions. For less risky situations, we observe that the time spent in financial distress discussed in (2) is considered high and the management wants to make $A_{\infty}$ as small as possible; hence, $\alpha^*$ is close to $c$. We notice the opposite results for the risky area (lower-right part) of the table. Let us fix $\mu$ and observe how the optimal level of $\alpha$ varies as $q$ changes. If we look at $\mu<-1.2$, we see that for fixed $\mu$, as $q$ increases, $\alpha^*$ gradually decreases. This means that $A_{\infty}$ decreases in order to balance the product $q\cdot A_{\infty}$. The parameter $q$ affects only the second element of the objective function. For $\mu >-1.2$, the situation is of ``bang-bang'' type: rather than smoothly decreasing values of $\alpha^*$, it is suggested that the optimal level of $\alpha$ be jumped from zero to a level close to $c$ as $q$ increases. In less risky situations, one can be rather bold in reducing the level $\alpha^*$ to a point close to $c$. Finally, looking at the lower-right part of the table, for fixed $q$, as $\mu$ decreases, $\alpha^*$ also decreases. Decreasing $\mu$ indicates an increasing risk of insolvency. The first (resp. second) element of the objective function \eqref{eq:objective_negative} would move $\alpha^*$ to higher (resp. lower) value. Since we have set $\Gamma=0.4$, we give more weight to the second element, and thus, $\alpha^*$ decreases.

\subsection{The effect of $\alpha$ on the asset value dynamics}\label{subsubsec:alpha-effect} Next, we consider the influence the decisions of the management have on the dynamics of the asset value once the leverage ratio is below the premonition level $R^*$ (which has a one-to-one correspondence with an appropriate $\alpha$). We continue the discussion with $R^*$ and insolvency level 1, rather than $\alpha$ and $c$ since it will be easier to interpret the results of our analysis. Note that the asset dynamics is expressed by the equation $A_t=A_0e^{\nu t+\sigma B_t^A}$ (see Section \ref{subsec:2-2}). \\
%\underline{Part 1:}\\
\indent We illustrate the effect of the management's decisions by simulation and take two reference points for this: December 2012 and December 2013 (see Table \ref{tbl:american-apparel}). The credit condition of American Apparel Inc. is quite different for these two reference points. One way to see this is to look at WACC. At the end of December 2013, WACC is $30.06\%$ in contrast to $11.84\%$ at the end of December 2012. Furthermore, $\nu-r$ is negative (positive) for December 2013 (2012) where $r$ denotes the risk-free rate. To deal with these two contrasting scenarios, we consider different management strategies for Dec-2012 and Dec-2013. This is reasonable, since the parameter $\nu$ of the asset process has opposite signs for these two reference points. The strategies below change the parameters gradually. Furthermore, we have kept the ratio of the change in $\nu$ to change in $\sigma$ roughly at 1.6 for all strategies. Note that the level $R^*$ is decided first and the strategies described below are implemented with respect to the given $R^*$.\\

\noindent \underline{$\nu-r>0$} : December 2012
\begin{enumerate}[leftmargin=*]
	\item [1.]  The management does not change its decisions regardless of whether the leverage ratio is below $R^*$ or not. That is, $\nu$ and $\sigma$ are unchanged.
	\item [2.] When the leverage ratio is at or below $R^*$, the management acts on behalf of the creditors and replaces the risky investments with less risky ones. For this, we subtract 0.0005 from $\nu$ and 0.0003 from $\sigma$ for the next simulation step only when the firm is operating at or below $R^*$. We allow this change in parameters while $\nu-r\ge0$ and $\sigma>0$.
	\item  [3.] When the leverage ratio is at or below $R^*$, the management acts on behalf of the shareholders and makes riskier investments. For this, we add 0.0008 to $\nu$ and 0.0005 to $\sigma$ for the next simulation step only when the firm is operating at or below $R^*$.
\end{enumerate}

\noindent \underline{$\nu-r<0$}: December 2013
\begin{enumerate}[leftmargin=*]
	\item [1.] The same as the strategy 1. for $\nu-r>0$.
	\item [2.] When the leverage ratio is at or below $R^*$, the management acts on behalf of the creditors and makes new investments with small risk in order to lift the negative drift up. For this, we add 0.0005 to $\nu$ and 0.0003 to $\sigma$ for the next simulation step only when the firm is operating at or below $R^*$.
	\item [3.]  When the leverage ratio is at or below $R^*$, the management acts on behalf of the shareholders and makes riskier investments. To compare the results with the strategy 2. here, we add 0.0015 to $\nu$ and 0.0009 to $\sigma$ for the next simulation step only when the firm is operating at or below $R^*$.
\end{enumerate}
See Table \ref{tbl:sim1} for the results based on 50,000 simulated paths. The time horizon is one year.  For comparison, we have used three levels of $R^*$. The first one corresponds to the solution of the optimization problem in \eqref{eq:objective_positive} and \eqref{eq:objective_negative} with $t=1$ and $\Gamma=0.4$. This is an objective level of $R^*$ that does not take into account the difference in the risk aversion of shareholders and creditors. The second and third threshold levels are $1.67$ and $1.25$, respectively (see Table \ref{tbl:american-apparel-2}).
The parameters ``change in $\nu$" and  ``change in $\sigma$" indicate the size of the change for the next simulation step when the present leverage ratio is at or below $R^*$.  We set the starting point of the leverage ratio as $\frac{A_0}{D_0}$ (see Table \ref{tbl:american-apparel}). For each strategy, we calculated two quantities of interest: the probability of becoming insolvent within 1 year $\p_{\cdot}(\exists t \; 0\le t\le 1; R_t\le 1)$ and the fraction of  1 year spent above $R^*$. 
\newline\indent As for December 2012, since $R^*=1.0022$ is close to $1$, we do not see any significant difference between the strategies. This level of $R^*$ is a result of $\Gamma=0.4$ that gives priority to the second term in the optimization problem. Since $R_0=1.91$ is much greater than $R^*$ and the $\nu-r$ is positive, naturally, almost the whole 1 year is spent above $R^*$ and $A_{\infty}$ is small. For $R^*=1.67$, we see that the strategy ``creditors'' results in smaller insolvency probability. This should be due to the smaller $\sigma$ that leads to less fluctuations and hence less probability of becoming insolvent.  On the other hand, since this strategy makes the drift smaller, the process is less likely to go above $R^*$.  Since $R^*=1.25$ gets closer to 1, the performances become similar to $R^*=1.0022$. As for December 2013, to our surprise, the performance of the riskier strategy is better. A possible explanation is the following:  Since the drift is quite small at the end of December 2013, the strategy that greatly increases the drift becomes necessary. Therefore, the strategy ``shareholders" results in the lowest probability of becoming insolvent. For this strategy, the time spent above $R^*$ is also the largest.

\begin{table}[h]
	\caption{Simulation results for December 2012 and 2013. The values are displayed up to 4 decimal points. For each $R^*$ and strategy, ``Insolvency" indicates the probability of becoming insolvent within 1 year. ``Time above $R^*$" indicates the fraction of 1 year spent above $R^*$ by the leverage ratio.}
	\label{tbl:sim1}
	\resizebox{0.7\textwidth}{!}{\begin{minipage}{\textwidth}
			\hspace{-1cm}\begin{tabular}{ccccccccccc}
				December 2012\\
				\hline
				Strategy & change in $\nu$ & change in  $\sigma$ &  &  &                &     Parameters          &  &  &                &               \\
				\cline{1-3}\cline{7-7}
				1. No change        & 0               & 0                                                              &  &  &                &   $R_0$            &  1.9100 &  &                &               \\
				2. Creditors        & -0.0005         & -0.0003                                                        &  &  &                &   $\nu$            & 0.1144 &  &                &               \\
				3. Shareholders        & 0.0008          & 0.0005                                                         &  &  &                &      $\sigma$         & 0.3720 &  &                &               \\
				&                 &                                                                &  &  &                &               &  &  &                &               \\
				& \multicolumn{2}{c}{$R^*=1.0022$}                                                                  &  &  & \multicolumn{2}{c}{$R^*=1.67$} &  &  & \multicolumn{2}{c}{$R^*=1.25$} \\
				\cline{2-3}\cline{6-7}\cline{10-11}
				Strategy & Insolvency      & Time above $R^*$                                                      &  &  & Insolvency     & Time above $R^*$    &  &  & Insolvency     & Time above $R^*$   \\
				1. No change        & 0.0439         & 0.9872                                                       &  &  & 0.0439        & 0.7939      &  &  & 0.0439        & 0.9654      \\
				2. Creditors        & 0.0432         & 0.9874                                                        &  &  & 0.0380        & 0.7840      &  &  & 0.0425        & 0.9643     \\
				3. Shareholders        & 0.0443         & 0.9867                                                       &  &  & 0.0533         & 0.8061      &  &  & 0.0453        & 0.9657      \\
				& & & & & & & & & & \\
				December 2013\\
				\hline
				Strategy & change in $\nu$ & change in  $\sigma$ &  &  &   &    Parameters                &  &  &                &               \\
				\cline{1-3}\cline{7-7}
				1. No change        & 0               & 0                                                              &  &  &                &    $R_0$           & 1.8596 &  &                &               \\
				2. Creditors        & 0.0005          & 0.0003                                                         &  &  &                &   $\nu$            & -0.5080 &  &                &               \\
				3. Shareholders        & 0.0015          & 0.0009                                                         &  &  &                &      $\sigma$         & 0.2974 &  &                &               \\
				& & & & & & & & & & \\
				& \multicolumn{2}{c}{$R^*=1.7332$}                                                 &  &  & \multicolumn{2}{c}{$R^*=1.67$} &  &  & \multicolumn{2}{c}{$R^*=1.25$} \\
				\cline{2-3}\cline{6-7}\cline{10-11}
				Strategy & Insolvency      & Time above $R^*$                                                  &  &  & Insolvency     & Time above $R^*$    &  &  & Insolvency     & Time above $R^*$   \\
				1. No change        & 0.4343         & 0.2685                                                       &  &  & 0.4343        & 0.3297      &  &  & 0.4343        & 0.7300      \\
				2. Creditors        & 0.4124         & 0.2887                                                       &  &  & 0.4130          & 0.3488      &  &  & 0.4295        & 0.7355      \\
				3. Shareholders        & 0.3849         & 0.3380                                                        &  &  & 0.3899        & 0.3925      &  &  & 0.4213         & 0.7431      \\
			\end{tabular}
	\end{minipage}}
\end{table}

\begin{figure}[H]
	\centering{\includegraphics[scale=0.7]{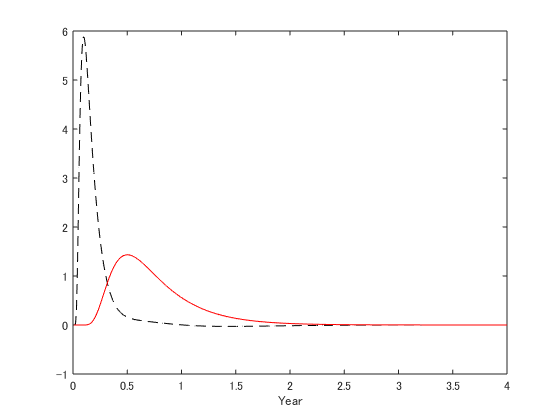}}
	\caption{The density of the distribution $\p_\cdot(T_c-\lambda_{\alpha^*}\in \diff t)$ for American Apparel Inc. for $\alpha^*=-0.2367$ (red line) and $\alpha^*=-1.3358$ (black dashed line). The end of December 2013 (see Tables \ref{tbl:american-apparel} and \ref{tbl:american-apparel-2}) is used as the reference point. $c=-2.0862$, $\mu=\frac{\nu-r}{\sigma}=-1.7128$.} \label{Fig-problem2-2}
\end{figure}
\subsection{Summary of the Risk Management Tool}\label{sec:conclusion}
We summarize the proposed risk management tool in this paper: the management watches the company's leverage ratio and they can estimate the drift and variance parameters of the firm's asset value process, based on the method described in Section \ref{subsubsec:method}. Then, for any future time horizon, the management can determine the threshold level $\alpha^*$ below which the company should be operated on alert and with precaution. For this $\alpha^*$, the management can compute $\p_\cdot(\lambda_{\alpha^*}\in \diff t)$ (together with other associated probabilities in Sections \ref{subsubsec:method} and \ref{subsubsec:evaluation}) and $\p_\cdot(T_c-\lambda_{\alpha^*}\in \diff t)$ to make plans for future business operations. For example, suppose that the company has parameters of December 2013 (see Table \ref{tbl:american-apparel}) and solves \eqref{eq:objective_negative} with $\Gamma=0.4$ and $y=0$ to obtain the solution $\alpha^*=-0.2367$. Then, the red line in Figure \ref{Fig-problem2-2} is the density of the distribution $\p_\cdot(T_c-\lambda_{\alpha^*}\in \diff t)$ for $\alpha^*=-0.2367$ (see Proposition \ref{prop:additional}). In this case, the density is clustered around $0.5$ years. This means that if the management sets $\alpha$ as $-0.2367$, given the company's current leverage level and asset growth rate, it is not unnatural to assume that there will be still half a year between the \emph{last} passage to $\alpha$ and the insolvency. The black dashed line is a replication of the density in Figure \ref{Fig-problem2} where the time between $\lambda_{\alpha^*}$ and insolvency is clustered around $0.1\sim 0.2$ years. This density  was computed under the arbitrary assumption that $\alpha^*=-1.3358$ which corresponds to debt being 80\% of assets.  It follows that one could be better-off  by setting $\alpha^*$ higher based on the optimization of $\alpha$, rather than arbitrarily setting the level, to avoid further deterioration and to have a longer period until insolvency even after the last passage of $\alpha^*$. Hence, together with $\p_y(\lambda_{\alpha^*}\in \diff t)$, one can extract information useful for risk management by the techniques presented in Sections \ref{sec:general} and \ref{sec:application}.

\begin{remark}
	\begin{normalfont}
		In the end, we comment on the possibility of extending our results in Section \ref{sec:general} to L\'{e}vy processes. Our analysis relies on the scale functions, speed measures, and Green functions of diffusions. For spectrally negative L\'{e}vy processes, the scale function (usually denoted by $W(\cdot)$) and the process conditioned to stay positive (which is created by $W(\cdot)$) are well-studied. See  \citet[Chapter 8]{Kyprianou_2014}, \citet{Kuznetsov2061}, and \citet{Bertoin_1996}. As for the reversal of L\'{e}vy processes, it is well-known that the dual of $X$, which is $-X$, has the same law as the reverse of $X$  from a fixed time (\citet[Chapter 2]{Bertoin_1996}). However, to our knowledge, the transform to make the process go to a specific state is not available. To extend the analysis in this article to L\'{e}vy processes, this task may be necessary.
	\end{normalfont}
\end{remark}

\begin{appendix}
\section{}\label{sec:q}
We demonstrate the calculation procedure for $q$ in  \eqref{eq:objective_positive} and \eqref{eq:objective_negative} by using the end of December 2013 as a reference point. The steps for estimating $q$ for the end of December 2013 are as follows:
	\begin{enumerate}[leftmargin=*]
		\item The value of equity $E$ is calculated as a difference between the market value of the asset $A$ (estimated by the method in Section \ref{subsubsec:method}) and the value of debt $D$ (used in the estimation of $A$). Using these estimates, the weights of the equity and debt are determined as $w_E=\frac{E}{A}$ and $w_D=\frac{D}{A}$, respectively.
		\item Using 1-year daily time-series (of 2013) of NASDAQ Composite Price Index and of the share price of the company of interest, we calculate daily returns of these two time-series. Regression (including an intercept) of the company's daily returns on the index's daily returns gives us the company $\beta$. With this $\beta$, the annual NASDAQ return ($R_m$) for 2013, and using the 1-year U.S. Treasury yield curve rate ($R_f$) for the end of 2013 as the risk-free rate (see Table \ref{tbl:american-apparel}), we calculate the annual expected return for the company. This is the cost of equity $C_E$.
		\item Dividing ''Interest paid" (from the cash flow statement) of year 2013 by the average of the debt values of December 2012 and December 2013 (the calculation of debt values is described in Section \ref{subsubsec:method}) gives us the cost of debt $C_D$.
		\item Setting the corporate tax rate as $35\%$, $q=w_E\cdot C_E+w_D\cdot C_D\cdot(1-0.35)$.
	\end{enumerate}
	
	\begin{table}[h]
		\centering
		\caption{Results of The Abovementioned Procedure for American Apparel Inc. for the end of December 2012 and December 2013 . $A$, $D$, $E$, and Interest Paid are displayed in millions of U.S. dollars. $q_{2012}=11.8445\%$ and $q_{2013}=30.0572\%$. The values in the table are rounded up to the 2nd decimal point.}
		\label{tbl:q}
		\resizebox{1\textwidth}{!}{\begin{minipage}{\textwidth}
			\hspace{1cm}	\begin{tabular}{|c|c||c|c||c|c|}
					\hline
					\multicolumn{6}{|c|}{December 2012}\\
					\hline
					$R_f$   & 0.16$\%$  & Interest Paid & 10.95 & $A$   & 223.57    \\ \hline
					$R_m$   & 15.91$\%$ & $D_{2012}$    & 117.05
					& $E$   & 106.52   \\ \hline
					$\beta$ & 1.11      & $D_{2011}$     & 101.75
					& $w_E$ & 47.64$\%$ \\ \hline
					$C_E$   & 17.71$\%$ & $C_D$         & 10.01$\%$ & $w_D$ & 52.36$\%$ \\ \hline
		\end{tabular}\end{minipage}}
		\resizebox{1\textwidth}{!}{\begin{minipage}{\textwidth}\hspace{1.1cm}\begin{tabular}{|c|c||c|c||c|c|}
					\hline
					\multicolumn{6}{|c|}{December 2013}\\
					\hline
					$R_f$   & 0.13$\%$  & Interest Paid & 18.95 & $A$   & 292.98    \\ \hline
					$R_m$   & 38.32$\%$ & $D_{2013}$    & 157.55    & $E$   & 135.43    \\ \hline
					$\beta$ & 1.43      & $D_{2012}$     & 117.05    & $w_E$ & 46.22$\%$ \\ \hline
					$C_E$   & 54.59$\%$ & $C_D$         & 13.80$\%$ & $w_D$ & 53.78$\%$ \\ \hline
				\end{tabular}
		\end{minipage}}
	\end{table}
	
	\begin{table}[H]
		\caption{Optimal values of $\alpha$ in \eqref{eq:objective_positive} and \eqref{eq:objective_negative} for different $q$ and $\mu$ with the end of December 2013 (see Tables \ref{tbl:american-apparel} and \ref{tbl:american-apparel-2}) as a reference point. $\Gamma=0.4$, $y=0$, $c=-2.0862$, $t=1$. The initial value of $\alpha$ for the optimization is set to $-1$. The values are displayed up to 4 decimal points.}
		\label{my-label}
		\resizebox{0.7\textwidth}{!}{\begin{minipage}{\textwidth}
				\hspace{-2cm}\begin{tabular}{c|cccccccccccccc}
					$\mu\backslash q$ & 0.01   & 0.04    & 0.07    & 0.1     & 0.13    & 0.16    & 0.19    & 0.22    & 0.25    & 0.28    & 0.31    & 0.34    & 0.37    & 0.4     \\\hline
					0.5               & 0.0000 & -2.0777 & -2.0811 & -2.0825 & -2.0828 & -2.0833 & -2.0835 & -2.0839 & -2.0841 & -2.0842 & -2.0843 & -2.0844 & -2.0845 & -2.0846 \\
					0.4               & 0.0000 & 0.0000  & -2.0798 & -2.0813 & -2.0819 & -2.0828 & -2.0831 & -2.0831 & -2.0834 & -2.0835 & -2.0837 & -2.0838 & -2.0839 & -2.0840 \\
					0.3               & 0.0000 & 0.0000  & -2.0777 & -2.0800 & -2.0808 & -2.0815 & -2.0821 & -2.0823 & -2.0826 & -2.0831 & -2.0832 & -2.0834 & -2.0835 & -2.0837 \\
					0.2               & 0.0000 & 0.0000  & 0.0000  & -2.0777 & -2.0792 & -2.0806 & -2.0812 & -2.0814 & -2.0817 & -2.0820 & -2.0823 & -2.0826 & -2.0827 & -2.0828 \\
					0.1               & 0.0000 & 0.0000  & 0.0000  & 0.0000  & -2.0778 & -2.0791 & -2.0802 & -2.0805 & -2.0809 & -2.0813 & -2.0816 & -2.0819 & -2.0821 & -2.0822 \\\hline
					-0.1              & 0.0000 & -2.0788 & -2.0820 & -2.0830 & -2.0835 & -2.0838 & -2.0842 & -2.0844 & -2.0845 & -2.0846 & -2.0846 & -2.0847 & -2.0847 & -2.0848 \\
					-0.15             & 0.0000 & -2.0792 & -2.0819 & -2.0829 & -2.0834 & -2.0837 & -2.0842 & -2.0843 & -2.0845 & -2.0846 & -2.0847 & -2.0847 & -2.0847 & -2.0848 \\
					-0.2              & 0.0000 & 0.0000  & -2.0819 & -2.0831 & -2.0834 & -2.0837 & -2.0841 & -2.0843 & -2.0844 & -2.0846 & -2.0847 & -2.0847 & -2.0848 & -2.0848 \\
					-0.25             & 0.0000 & 0.0000  & -2.0816 & -2.0830 & -2.0834 & -2.0837 & -2.0841 & -2.0843 & -2.0844 & -2.0845 & -2.0846 & -2.0847 & -2.0848 & -2.0848 \\
					-0.3              & 0.0000 & 0.0000  & -2.0816 & -2.0826 & -2.0835 & -2.0837 & -2.0839 & -2.0843 & -2.0844 & -2.0845 & -2.0846 & -2.0847 & -2.0848 & -2.0849 \\
					-0.35             & 0.0000 & 0.0000  & -2.0821 & -2.0826 & -2.0834 & -2.0838 & -2.0839 & -2.0843 & -2.0844 & -2.0845 & -2.0846 & -2.0847 & -2.0848 & -2.0848 \\
					-0.4              & 0.0000 & 0.0000  & 0.0000  & -2.0827 & -2.0834 & -2.0838 & -2.0839 & -2.0841 & -2.0844 & -2.0845 & -2.0846 & -2.0847 & -2.0846 & -2.0847 \\
					-0.45             & 0.0000 & 0.0000  & 0.0000  & -2.0828 & -2.0832 & -2.0836 & -2.0840 & -2.0841 & -2.0842 & -2.0845 & -2.0846 & -2.0847 & -2.0848 & -2.0848 \\
					-0.5              & 0.0000 & 0.0000  & 0.0000  & 0.0000  & -2.0833 & -2.0837 & -2.0840 & -2.0842 & -2.0842 & -2.0846 & -2.0846 & -2.0847 & -2.0847 & -2.0848 \\
					-0.55             & 0.0000 & 0.0000  & 0.0000  & 0.0000  & -2.0832 & -2.0837 & -2.0840 & -2.0842 & -2.0843 & -2.0846 & -2.0847 & -2.0847 & -2.0847 & -2.0848 \\
					-0.6              & 0.0000 & 0.0000  & 0.0000  & 0.0000  & -2.0849 & -2.0836 & -2.0839 & -2.0843 & -2.0843 & -2.0845 & -2.0847 & -2.0847 & -2.0847 & -2.0848 \\
					-0.65             & 0.0000 & 0.0000  & 0.0000  & 0.0000  & 0.0000  & -2.0835 & -2.0840 & -2.0841 & -2.0844 & -2.0844 & -2.0846 & -2.0847 & -2.0848 & -2.0848 \\
					-0.7              & 0.0000 & 0.0000  & 0.0000  & 0.0000  & 0.0000  & 0.0000  & -2.0838 & -2.0842 & -2.0843 & -2.0845 & -2.0845 & -2.0846 & -2.0848 & -2.0849 \\
					-0.75             & 0.0000 & 0.0000  & 0.0000  & 0.0000  & 0.0000  & 0.0000  & -2.0839 & -2.0840 & -2.0843 & -2.0844 & -2.0846 & -2.0846 & -2.0847 & -2.0848 \\
					-0.8              & 0.0000 & 0.0000  & 0.0000  & 0.0000  & 0.0000  & 0.0000  & 0.0000  & -2.0846 & -2.0847 & -2.0845 & -2.0845 & -2.0847 & -2.0847 & -2.0848 \\
					-0.85             & 0.0000 & 0.0000  & 0.0000  & 0.0000  & 0.0000  & 0.0000  & 0.0000  & -2.0845 & -2.0847 & -2.0848 & -2.0846 & -2.0846 & -2.0848 & -2.0848 \\
					-0.9              & 0.0000 & 0.0000  & 0.0000  & 0.0000  & 0.0000  & 0.0000  & 0.0000  & 0.0000  & -2.0843 & -2.0848 & -2.0849 & -2.0846 & -2.0847 & -2.0848 \\
					-0.95             & 0.0000 & 0.0000  & 0.0000  & 0.0000  & 0.0000  & 0.0000  & 0.0000  & 0.0000  & 0.0000  & -2.0844 & -2.0849 & -2.0850 & -2.0847 & -2.0851 \\
					-1                & 0.0000 & 0.0000  & 0.0000  & 0.0000  & 0.0000  & 0.0000  & 0.0000  & 0.0000  & 0.0000  & -2.0844 & -2.0845 & -2.0846 & -2.0850 & -2.0848 \\
					-1.05             & 0.0000 & 0.0000  & 0.0000  & 0.0000  & 0.0000  & 0.0000  & 0.0000  & 0.0000  & 0.0000  & 0.0000  & -2.0845 & -2.0846 & -2.0847 & -2.0851 \\
					-1.1              & 0.0000 & 0.0000  & 0.0000  & 0.0000  & 0.0000  & 0.0000  & 0.0000  & 0.0000  & 0.0000  & 0.0000  & 0.0000  & -2.0846 & -2.0847 & -2.0848 \\
					-1.15             & 0.0000 & 0.0000  & 0.0000  & 0.0000  & 0.0000  & 0.0000  & 0.0000  & 0.0000  & 0.0000  & 0.0000  & 0.0000  & -2.0855 & -2.0847 & -2.0848 \\
					-1.2              & 0.0000 & 0.0000  & 0.0000  & 0.0000  & 0.0000  & 0.0000  & 0.0000  & 0.0000  & 0.0000  & 0.0000  & 0.0000  & -0.0001 & -2.0850 & -2.0847 \\
					-1.25             & 0.0000 & 0.0000  & 0.0000  & 0.0000  & 0.0000  & 0.0000  & 0.0000  & 0.0000  & 0.0000  & 0.0000  & 0.0000  & -0.0001 & -0.0003 & -2.0847 \\
					-1.3              & 0.0000 & 0.0000  & 0.0000  & 0.0000  & 0.0000  & 0.0000  & 0.0000  & 0.0000  & 0.0000  & 0.0000  & 0.0000  & -0.0001 & -0.0224 & -2.0847 \\
					-1.35             & 0.0000 & 0.0000  & 0.0000  & 0.0000  & 0.0000  & 0.0000  & 0.0000  & 0.0000  & 0.0000  & 0.0000  & -0.0001 & -0.0019 & -0.0723 & -0.1551 \\
					-1.4              & 0.0000 & 0.0000  & 0.0000  & 0.0000  & 0.0000  & 0.0000  & 0.0000  & 0.0000  & 0.0000  & 0.0000  & -0.0002 & -0.0434 & -0.1216 & -0.2049 \\
					-1.45             & 0.0000 & 0.0000  & 0.0000  & 0.0000  & 0.0000  & 0.0000  & 0.0000  & 0.0000  & 0.0000  & -0.0001 & -0.0151 & -0.0916 & -0.1703 & -0.2542 \\
					-1.5              & 0.0000 & 0.0000  & 0.0000  & 0.0000  & 0.0000  & 0.0000  & 0.0000  & 0.0000  & -0.0001 & -0.0004 & -0.0619 & -0.1394 & -0.2186 & -0.3029 \\
					-1.55             & 0.0000 & 0.0000  & 0.0000  & 0.0000  & 0.0000  & 0.0000  & 0.0000  & 0.0000  & -0.0001 & -0.0304 & -0.1090 & -0.1870 & -0.2666 & -0.3512 \\
					-1.6              & 0.0000 & 0.0000  & 0.0000  & 0.0000  & 0.0000  & 0.0000  & 0.0000  & 0.0000  & -0.0019 & -0.0771 & -0.1560 & -0.2343 & -0.3143 & -0.3993 \\
					-1.65             & 0.0000 & 0.0000  & 0.0000  & 0.0000  & 0.0000  & 0.0000  & 0.0000  & -0.0001 & -0.0419 & -0.1236 & -0.2029 & -0.2815 & -0.3618 & -0.4472 \\
					-1.7              & 0.0000 & 0.0000  & 0.0000  & 0.0000  & 0.0000  & 0.0000  & -0.0001 & -0.0043 & -0.0882 & -0.1701 & -0.2496 & -0.3285 & -0.4092 & -0.4952 \\
					-1.75             & 0.0000 & 0.0000  & 0.0000  & 0.0000  & 0.0000  & 0.0000  & -0.0001 & -0.0479 & -0.1344 & -0.2166 & -0.2962 & -0.3755 & -0.4566 & -0.5432 \\
					-1.8              & 0.0000 & 0.0000  & 0.0000  & 0.0000  & 0.0000  & -0.0001 & -0.0038 & -0.0940 & -0.1806 & -0.2629 & -0.3428 & -0.4224 & -0.5041 & -0.5914 \\
					-1.85             & 0.0000 & 0.0000  & 0.0000  & 0.0000  & 0.0000  & -0.0001 & -0.0467 & -0.1401 & -0.2268 & -0.3093 & -0.3895 & -0.4694 & -0.5516 & -0.6399 \\
					-1.9              & 0.0000 & 0.0000  & 0.0000  & 0.0000  & 0.0000  & -0.0010 & -0.0928 & -0.1862 & -0.2730 & -0.3556 & -0.4361 & -0.5165 & -0.5994 & -0.6888 \\
					-1.95             & 0.0000 & 0.0000  & 0.0000  & 0.0000  & -0.0001 & -0.0359 & -0.1389 & -0.2323 & -0.3192 & -0.4020 & -0.4829 & -0.5638 & -0.6475 & -0.7383 \\
					-2                & 0.0000 & 0.0000  & 0.0000  & 0.0000  & -0.0002 & -0.0820 & -0.1850 & -0.2784 & -0.3655 & -0.4485 & -0.5297 & -0.6113 & -0.6960 & -0.7886
				\end{tabular}
		\end{minipage}}\label{tbl:comparative_alpha}
	\end{table}
\end{appendix}

%\bibliographystyle{plainnat}
%\bibliography{references}
\end{document}